\tikzstyle{every picture} = [>=latex]
\theoremstyle{plain}
\newtheorem{proposition}[theorem]{Proposition}
\newtheorem{conjecture}[theorem]{Conjecture}
\def\ca#1{{\cal#1}}
\def\crg{\mathop{\rm cr}}
\def\tcrn{\mathop{\rm tcr}}
\def\acrn{\mathop{\rm acr}}
\let\sem\setminus
\newcommand{\tuple}[1]{\langle{#1}\rangle}  % Tuple
\def\nokernelhypo{\mbox{NP\,$\subseteq$\,coNP/poly}}
\title{Crossing Number is Hard for Kernelization}
\author{Petr Hlin\v en\'y and Marek Der\v n\'ar}
\affil{Faculty of Informatics, Masaryk University Brno, Czech Republic\\
	\texttt{hlineny@fi.muni.cz, m.dernar@gmail.com}}
\authorrunning{P.~Hlin\v en\'y and M.~Der\v n\'ar}
\subjclass{F.2.2 Nonnumerical Algorithms and Problems -- Geometrical
	problems and computations, F.1.3 Complexity Measures and Classes --
	Reducibility and completeness}
\keywords{crossing number; tile crossing number; parameterized complexity;
	polynomial kernel; cross-composition}
\begin{document}

\maketitle

\begin{abstract}
The graph crossing number problem, $\crg(G)\leq k$,
asks for a drawing of a graph $G$ in the plane with at most $k$ edge crossings.
Although this problem is in general notoriously difficult, 
it is fixed-parameter tractable for the parameter~$k$ [Grohe].
This suggests a closely related question of whether this problem
has a {\em polynomial kernel},
meaning whether every instance of $\crg(G)\leq k$ can be in polynomial time
reduced to an equivalent instance of size polynomial in~$k$
(and independent of~$|G|$).
We answer this question in the negative.
Along the proof we show that the tile crossing number problem of twisted
planar tiles is NP-hard, which has been an open problem for some time, too,
and then employ the complexity technique of cross-composition.
Our result holds already for the special case of graphs obtained 
from planar graphs by adding one edge.
% 
% Most variants of the graph crossing number problem are computationally hard,
% but there is one remarkable exception:
% the crossing number is FPT parameterized by itself [Grohe],
% meaning that the question whether a graph can be drawn with $k$
% edge crossings can be answered for any fixed~$k$ in polynomial time
% with the degree independent of~$k$.
\end{abstract}

\section{Introduction}
%%%%%%%%%%%%%%%%%%%%%%%%%%%%%%%%%%%%%%%%%%%%%%%%%%%%%%%%%%%%%%%%%
\label{sec:intro}

We refer to Sections~\ref{sec:crossingn},\ref{sec:parameterized}
for detailed formal definitions.
Briefly, the {\em crossing number $\crg(G)$} of a graph $G$ 
is the minimum number of pairwise edge crossings in a drawing of $G$ in the plane. 
Finding the crossing number of a graph is one of the most prominent hard
optimization problems in geometric graph theory \cite{GJ} and is NP-hard
already in very restricted cases, e.g., for cubic
graphs~\cite{DBLP:journals/jct/Hlineny06a}, and for graphs
with prescribed edge rotations~\cite{DBLP:journals/algorithmica/PelsmajerSS11}.
% and even for planar graphs with one added edge~\cite{DBLP:journals/siamcomp/CabelloM13}.
Concerning approximations, there exists $c>1$ such that the crossing number 
cannot be approximated within the factor $c$ in polynomial
time~\cite{DBLP:journals/dcg/Cabello13}.
Moreover, the following very special case of the problem is still hard -- a
result that greatly inspired our paper:
\begin{theorem}[Cabello and Mohar~\cite{DBLP:journals/siamcomp/CabelloM13}]
\label{thm:CM}
Let $G$ be an almost-planar graph, i.e., $G$ having an edge $e\in E(G)$ such
that $G\sem e$ is planar 
(called also {\em near-planar} in~\cite{DBLP:journals/siamcomp/CabelloM13}).
Let $k\geq1$ be an integer.
Then it is NP-complete to decide whether $\crg(G)\leq k$.
\end{theorem}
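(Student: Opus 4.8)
The plan has two parts: membership in NP, which is easy, and NP-hardness, which is the real content. For membership, observe that $\crg(G)\le |E(G)|^2$ always, so we may assume the parameter $k$ is polynomially bounded; then a good drawing is certified by its \emph{planarization} --- replace each of the (at most $k$) crossings by a new degree-$4$ vertex to obtain a planar graph together with a rotation system --- which has polynomial size and is checkable in polynomial time.

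For NP-hardness I would work with the following picture. If $G=H+e$ with $H$ planar and $e=uv$, and if $\Pi$ is a plane embedding of $H$, then the cheapest way to add $e$ on top of $\Pi$ is to route it as an arc through the faces of $\Pi$, paying one crossing for each edge of $H$ it meets; the optimum of this over arcs is a ``dual distance'' $d_\Pi(u,v)$, and we are still free to choose $\Pi$, the only nontrivial freedom living at the $2$-separations of $H$ (organised by its SPQR-tree). The crossing number is thus bounded above by $\min_\Pi d_\Pi(u,v)$, and the hardness should come from encoding a combinatorial choice problem into the selection of $\Pi$. Concretely, I would reduce from a cut- or satisfiability-type problem --- \textsc{Max-Cut}, or one of its planar relatives such as \textsc{Planar NAE-3Sat}, or \textsc{Max-2-Sat} --- which all have the convenient ``count the violated constraints'' shape. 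Given such an instance, build a planar $H$ with: for each variable/vertex a rigid gadget attached along a $2$-cut whose sole embedding freedom is one ``flip'', the two states modelling the two values of a Boolean variable (equivalently, the two sides of a bipartition); for each clause/edge a ``connector'' wired between the relevant variable gadgets so that the $u$--$v$ arc is forced to cross it exactly once precisely when the corresponding constraint is violated; and all remaining parts $3$-connected so that no unintended embedding can cheat. Arrange everything in a long thin strip (a tile-like layout) so that any routing of $e$ must pass the gadgets in order without shortcuts. Then the minimum number of crossings equals the number of violated constraints, and $\crg(G)\le k$ holds iff the original instance has a solution violating at most $k$ constraints.

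The step I expect to be the main obstacle is precisely what keeps this theorem from being routine: in an optimal drawing of $G=H+e$ the subgraph $H$ need \emph{not} be drawn planar, so $\crg(G)$ need not equal $\min_\Pi d_\Pi(u,v)$, and a clever drawing could in principle subvert the gadget analysis by introducing a few crossings among the edges of $H$ itself. I would neutralise this by first proving NP-hardness of the \emph{weighted} near-planar crossing number --- crossing an edge of weight $w$ costs $w$, with $e$ carrying an enormous weight --- where it is cheap to argue that some optimal drawing keeps $H$ plane; and then converting weights into multiplicities, replacing a weight-$w$ structural edge of $H$ by a bundle of $w$ parallel edges (or a small planar ``wall'' of the same effective cost) so that in the resulting unweighted $G$ crossing any structural edge of $H$ is prohibitively expensive, forcing every near-optimal drawing to keep $H$ plane and to cross only the cheap connectors. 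One checks that the weights used are polynomially bounded, so the blow-up is polynomial. A secondary but necessary technical point is a finite case analysis --- over the SPQR-tree of $H$ and over the ways an arc can pass a gadget --- verifying that the flip gadgets and connectors behave exactly as specified for \emph{every} embedding and \emph{every} routing, and in particular that no ``diagonal'' shortcut of the arc undercuts the intended count.
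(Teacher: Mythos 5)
You should first note that this paper does not actually prove Theorem~\ref{thm:CM}: it is quoted from Cabello and Mohar, and the only ingredient of their argument that the present paper re-uses is Theorem~\ref{thm:anchoredhard}, the NP-hardness of the \emph{anchored} crossing number of a disjoint union of two connected planar anchored graphs. So your attempt has to be measured against the cited proof. Your NP-membership argument (bound $\crg(G)\le|E(G)|^2$, then certify a drawing by its planarization) is fine, and your identification of the central obstacle --- that in an optimal drawing of $G=H+e$ the planar part $H$ need not itself be drawn planar, so $\crg(G)$ can be strictly smaller than $\min_\Pi d_\Pi(u,v)$ --- is exactly right; this is what makes the theorem non-routine.

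The hardness half of your proposal, however, is a plan rather than a proof, and it has two concrete gaps. First, all of the mathematical content --- the flip gadgets at $2$-separations, the connectors that the $u$--$v$ arc ``crosses exactly once iff the constraint is violated,'' the claim that the arc must pass the gadgets in order without shortcuts, and the case analysis certifying that no drawing cheats --- is postulated, not constructed; this is precisely where the entire difficulty of the theorem lives, and without explicit gadgets and a lower bound valid for \emph{every} drawing (not only those in which $H$ is plane and $e$ is an arc through a fixed embedding) nothing is established. Second, the device you offer to neutralise the obstacle is backwards as stated: giving $e$ ``an enormous weight'' makes crossings \emph{on} $e$ expensive and therefore pushes an optimal drawing to hide its crossings among the light edges of $H$; it does nothing to keep $H$ plane. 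What actually works --- and what both Cabello--Mohar and this paper do, cf.\ Proposition~\ref{pro:weighted} and the weight condition $t\ge w_1\cdot w_2+1$ in Definition~\ref{def:diagonalsep} --- is to make the \emph{structural} edges of $H$ heavy so that all crossings are confined to a designated light part; you mention this variant in passing, but even then one must argue that the light edges do not cross one another in some unintended way. Finally, the route itself differs from the cited one: rather than per-constraint gadgets attached along $2$-cuts of $H$, Cabello and Mohar go through the anchored crossing number of two connected planar pieces anchored in a fixed order on a disc boundary (Theorem~\ref{thm:anchoredhard}), where the hardness stems from how the two pieces may interleave inside the disc, and the near-planar instance is then obtained by fencing the disc with heavy edges and threading one extra edge through --- essentially the construction reproduced in Corollary~\ref{cor:diagseptile}. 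In short, your proposal correctly locates the difficulty but does not overcome it.
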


On the other hand, it has been shown that the problem is {\em fixed-parameter
tractable} when parameterized by itself: one can decide whether $\crg(G)\leq k$
% test whether a graph has a crossing number at most $k$
in quadratic (Grohe~\cite{DBLP:conf/stoc/Grohe01}) 
and even linear (Kawarabayashi--Reed~\cite{DBLP:conf/stoc/KawarabayashiR07}) 
time while having $k$ fixed.
Fixed-parameter tractability (FPT) is closely related to the concept of
so called {\em kernelization}.
In fact, one can easily show that a (decidable) problem $\ca A$ parameterized 
by an integer $k$ is FPT if,
and only if, every instance of $\ca A$ can be in polynomial time reduced to
an equivalent instance (the {\em kernel}) of size bounded only by some function of~$k$.
This function of $k$, bounding the kernel size, may in general be arbitrarily huge.
Though, the really interesting case is when the kernel size may be bounded
by a polynomial function of~$k$ (a~{\em polynomial kernel}).

The nature of the methods used in \cite{DBLP:conf/stoc/Grohe01,DBLP:conf/stoc/KawarabayashiR07},
together with the recent great advances in algorithmic graph minors theory,
might suggest that the crossing number problem $\crg(G)\leq k$ should have a
polynomial kernel in~$k$, as many related FPT problems do.
This question was raised as open, e.g., at WorKer 2015 [unpublished].
Polynomial kernels for some special crossing number problem instances
were constructed before, e.g., in~\cite{DBLP:conf/gd/BannisterES13}.
The general result is, however, very unlikely to hold as our main result claims:

\begin{theorem}
\label{thm:nokernel}
Let $G$ be an {\em almost-planar} graph, i.e., $G$ having an edge $e\in E(G)$ such
that $G\sem e$ is planar.
Let $k\geq1$ be an integer.
The crossing number problem, asking if $\crg(G)\leq k$ while parameterized by~$k$,
does not admit a polynomial kernel unless \nokernelhypo.
\end{theorem}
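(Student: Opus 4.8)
The plan is to invoke the cross-composition framework of Bodlaender, Jansen and Kratsch: it suffices to design an OR-cross-composition from some NP-hard language $L$ into the parameterized problem ``$\crg(G)\le k$'' restricted to almost-planar graphs. That is, a polynomial-time algorithm that, given $t$ instances $x_1,\dots,x_t$ of $L$ equivalent under a suitable polynomial equivalence relation (so that, after cheap padding, they all have the same ``shape''), outputs a single almost-planar graph $G$ and an integer $k$ of size polynomial in $\max_i|x_i|+\log t$, such that $\crg(G)\le k$ if and only if $x_i$ is a YES-instance for at least one $i$. Since the target parameter is exactly $k$, the existence of such a composition forbids a polynomial kernel unless \nokernelhypo.

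For the source language $L$ we take the \emph{twisted tile crossing number problem} on twisted planar tiles: deciding, for a tile $T$ that is planar inside its rectangle but whose twisted closure need not be, whether its twisted tile crossing number $\tcrn^{\mathrm{tw}}(T)$ is at most a given $\ell$. The first half of the proof therefore establishes NP-hardness of this problem. The plan here is to piggyback on the almost-planar reduction behind Theorem~\ref{thm:CM}: cutting an almost-planar graph $G_0+e$ open along a simple arc joining the two ends of $e$ turns it into a disk drawing whose boundary carries a prescribed \emph{twisted} pattern of wall vertices, and, conversely, the twisted tile crossing number of such a cut-open tile governs $\crg(G_0+e)$. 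Carrying the Cabello--Mohar gadgets through this correspondence yields an NP-hard family of twisted planar tiles, which moreover can be normalized to a fixed wall size and a fixed threshold within an equivalence class.

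The heart of the argument is the composition. Given uniform twisted planar tiles $T_1,\dots,T_t$ with common wall size $h$ and threshold $\ell$, build a planar graph $H$ as a cyclic (``annular'') arrangement of $t$ rigid sectors, the $i$-th sector housing a crossing-free drawing of $T_i$, all sectors sewn to a common inner boundary cycle $C_{\mathrm{in}}$ and outer boundary cycle $C_{\mathrm{out}}$. Place a vertex $u$ inside the disk bounded by $C_{\mathrm{in}}$ and a vertex $v$ outside $C_{\mathrm{out}}$, attach $u,v$ rigidly to their cycles, and set $G=H\cup\{uv\}$, which is almost-planar by construction. Making $H$ $3$-connected (or otherwise embedding-rigid) forces, in every drawing of $G$, the edge $e=uv$ to leave the region of $u$ across $C_{\mathrm{in}}$, traverse exactly one sector, and enter the region of $v$ across $C_{\mathrm{out}}$; routing $e$ through sector $i$ cuts that sector into a disk whose boundary presents the walls of $T_i$ precisely in the twisted pattern, so the crossings there number at least $\tcrn^{\mathrm{tw}}(T_i)$, while all other tiles stay planar at zero cost. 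This gives $\crg(G)=2+\min_i\tcrn^{\mathrm{tw}}(T_i)$, and $k:=2+\ell$ completes the equivalence, with $k$ polynomial in $h+\ell\le\mathrm{poly}(\max_i|x_i|)$ and independent of $t$.

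The step I expect to be the main obstacle is the lower bound $\crg(G)\ge 2+\min_i\tcrn^{\mathrm{tw}}(T_i)$: one must rule out every clever global redrawing of $G$ --- drawings that distort several sectors simultaneously, or route $e$ winding around the annulus, or cross $C_{\mathrm{in}},C_{\mathrm{out}}$ more than twice --- and show none beats ``send $e$ straight through the cheapest sector''. This requires the rigidity of $H$ together with a careful charging argument that bounds crossings on $C_{\mathrm{in}},C_{\mathrm{out}}$ from below and isolates a single sector forced to realize a twisted drawing of its tile; the NP-hardness reduction of the first step rests on the same rigidity analysis carried out for one tile in isolation.
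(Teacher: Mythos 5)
Your high-level strategy is the paper's: obtain non-kernelizability via an {\sc or}-cross-composition (Theorem~\ref{thm:nopolykernel}) from the NP-hard twisted tile crossing number problem, whose hardness is in turn derived from the Cabello--Mohar machinery. The composition itself, however, is genuinely different: the paper concatenates the \emph{inverted} tiles $T_1^{\updownarrow}\otimes\cdots\otimes T_t^{\updownarrow}$ linearly and closes the walls with a heavy $4$-cycle $C_0$, so that a parity argument on the components alternating above/below a thick diagonal path forces exactly one tile to absorb the twist (Lemma~\ref{lem:twistedorcompose}); you instead arrange the untwisted planar tiles in an annulus and force the twist by an edge $uv$ joining the inside of $C_{\mathrm{in}}$ to the outside of $C_{\mathrm{out}}$. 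Your variant has the minor elegance of avoiding the even/odd case distinction, and it makes almost-planarity of $G$ immediate.

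The gap is in the step you yourself flag as the obstacle, and it is not a technicality. First, ``$3$-connected / embedding-rigid'' does not do the work you need: $3$-connectivity pins down the combinatorial embedding of a \emph{planar} drawing, but says nothing about drawings with crossings, which is exactly the regime of the lower bound. The paper (following \cite{DBLP:journals/siamcomp/CabelloM13}) instead uses heavy edges of weight exceeding the budget so that $C_0$ and the paths $Q_i$ are provably uncrossed in any drawing with at most $k$ crossings; in your construction this needs extra care because $e=uv$ \emph{must} cross $C_{\mathrm{in}}$ and $C_{\mathrm{out}}$, so their weights cannot be taken arbitrarily large relative to the budget. Second, the asserted mechanism ``routing $e$ through sector $i$ presents the walls of $T_i$ in the twisted pattern'' is not what the topology gives you: cutting the annulus along the arc of $e$ yields a disk on whose boundary $C_{\mathrm{in}}$ and $C_{\mathrm{out}}$ appear with \emph{reversed relative orientation}, i.e.\ the twist sits between the inner and outer boundary circles, not between the two radial walls of the sector that $e$ happens to traverse. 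For a generic planar tile nothing then forces any single $T_i$ to be drawn twisted, and a drawing could ``spread'' the cost over several sectors or route $e$ through a cheap corridor; the equality $\crg(G)=2+\min_i\tcrn(T_i^{\updownarrow})$ is therefore unsubstantiated. This is precisely the role of the paper's \emph{diagonally separated} structure (Definition~\ref{def:diagonalsep}): the thick path $Q$ running diagonally through every tile, the degree-one pendants $x^1,y^2$, and the connectivity conditions are what make the parity/charging argument of Lemma~\ref{lem:twistedorcompose} go through and isolate a single tile that must realize $\tcrn(T_i^{\updownarrow})$. Without an analogue of that lemma, your composition establishes only the upper bound direction of the {\sc or}-equivalence.
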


In order to prove Theorem~\ref{thm:nokernel}, we use the technique of
{\em cross-composition}~\cite{DBLP:journals/siamdm/BodlaenderJK14}.
While its formal description is postponed till Section~\ref{sec:parameterized}, 
here we very informally outline the underlying idea of cross-composition.
Imagine we have an NP-hard language $\ca L$ such that we can
``{\sc or}-cross-compose'' an arbitrary collection of instances $x_1,x_2,\dots,x_t$
of $\ca L$ into the crossing number problem $\crg(G_0)\leq k_0$
for suitable $G_0$ and $k_0$ efficiently depending on $x_1,x_2,\dots,x_t$.
By the words ``{\sc or}-cross-compose'' we mean that $\crg(G_0)\leq k_0$
holds if and only if $x_i\in\ca L$ for some $1\leq i\leq t$
(informally, $x_1\in\ca L$ {\sc or} $x_2\in\ca L$ {\sc or} \dots).
Now assume we could always reduce a crossing number instance $\tuple{G,k}$ into
an equivalent instance of size $p(k)$ where $p$ is a polynomial.
Then, for the instance $\tuple{G_0,k_0}$ and suitable $t$ such that 
$p(k_0)<\!<t\approx|G_0|$\mbox{$\,<\!<2^{|x_i|}$},
such a reduction effectively means that we should somehow decide many of the
$t$ instances $x_i\in\ca L$ in time polynomial in $|G_0|$ (which is $<\!<2^{|x_i|}$).
The latter sounds highly unlikely \cite{DBLP:journals/jcss/FortnowS11} 
in the complexity theory.

The task is to find a suitable NP-hard language $\ca L$ for the
aforementioned construction.
While the ordinary crossing number problem is not suitable for cross-composition
(roughly, since the crossing numbers of disjoint instances sum up together),
a helping hand is given by the concept of the {\em tile crossing number}
\cite{DBLP:journals/jgt/PinontoanR03}, defined in detail in Section~\ref{sec:crossingn}.

Informally, a {\em tile} is a graph $T$ with two disjoint sequences of vertices 
defining the {left and right walls} of~$T$.
A {\em tile drawing} is a drawing of $T$ inside a rectangle such that the
walls of $T$ lie respectively on the left and right sides of this rectangle.
A tile $T$ is planar if $T$ admits a tile drawing without crossings,
and $T$ is {\em twisted planar} if $T$ becomes a planar tile after inverting
(upside-down) one of the walls.
As observed by Schaefer~\cite{survscha}, the tile crossing number problem is NP-hard
by a trivial reduction from ordinary crossing number, but we need much more.
In order to embed the tile crossing number problem in a cross-composition construction,
which will be realized as a concatenation of the tile instances across their respective
walls, we shall use only twisted planar tiles.
See Figure~\ref{fig:tilecomposition}.
The underlying idea which makes the cross-composition work,
is that only one of the tile instances is drawn
twisted in the concatenation and all the other contribute no crossings.

\begin{figure}[ht]
\begin{center}\bigskip
\begin{tikzpicture}[scale=0.5]
\normalsize
\tikzstyle{every node}=[draw, shape=circle, minimum size=2.5pt,inner sep=1.5pt, fill=black]
\tikzstyle{every path}=[color=lightgray, fill=lightgray]
\draw (0,0) -- (4,2) -- (4,0) -- (0,2) -- (0,0) ;
\draw (4,0) -- (8,2) -- (8,0) -- (4,2) -- (4,0) ;
\draw (8,0) -- (12,2) -- (12,0) -- (8,2) -- (8,0) ;
\draw (12,0) -- (16,2) -- (16,0) -- (12,2) -- (12,0) ;
\draw (16,0) -- (20,2) -- (20,0) -- (16,2) -- (16,0) ;
\tikzstyle{every path}=[color=black]
\draw (0,0) node (x1) {}; \draw (0,2) node[fill=red] (x2) {};
\draw[line width=1.2pt, dashed] (x2) -- (x1);
\draw (4,2) node[fill=red] (x3) {}; \draw (4,0) node (x4) {};
\draw (x1) -- (x3) -- (x4) -- (x2) -- (x1);
\draw (8,0) node (x5) {}; \draw (8,2) node[fill=red] (x6) {};
\draw (x3) -- (x5) -- (x6) -- (x4);
\draw (12,2) node[fill=red] (x7) {}; \draw (12,0) node (x8) {};
\draw (x5) -- (x7) -- (x8) -- (x6);
\draw (16,0) node (x9) {}; \draw (16,2) node[fill=red] (x10) {};
\draw (x7) -- (x9) -- (x10) -- (x8);
\draw (20,2) node[fill=red] (x11) {}; \draw (20,0) node (x12) {};
\draw (x9) -- (x11) -- (x12) -- (x10);
\draw[line width=1.2pt, dashed] (x12) -- (x11);
\end{tikzpicture}
\\{\boldmath$\downarrow$}\\[3ex]
\begin{tikzpicture}[scale=0.5]
\normalsize
\tikzstyle{every node}=[draw, shape=circle, minimum size=2.5pt,inner sep=1.5pt, fill=black]
\tikzstyle{every path}=[color=lightgray, fill=lightgray]
\draw (x1) rectangle (16,2) ;
\draw (16,0) -- (20,2) -- (20,0) -- (16,2) -- (16,0) ;
\tikzstyle{every path}=[color=black]
\draw (0,0) node (x1) {}; \draw (0,2) node[fill=red] (x2) {};
\draw[line width=1.2pt, dashed] (x2) -- (x1);
\draw (4,0) node[fill=red] (x3) {}; \draw (4,2) node (x4) {};
\draw (x1) -- (x3) -- (x4) -- (x2) -- (x1) ;
\draw (8,0) node (x5) {}; \draw (8,2) node[fill=red] (x6) {};
\draw (x3) -- (x5) -- (x6) -- (x4);
\draw (12,0) node[fill=red] (x7) {}; \draw (12,2) node (x8) {};
\draw (x5) -- (x7) -- (x8) -- (x6);
\draw (16,0) node (x9) {}; \draw (16,2) node[fill=red] (x10) {};
\draw (x7) -- (x9) -- (x10) -- (x8);
\draw (20,2) node[fill=red] (x11) {}; \draw (20,0) node (x12) {};
\draw (x9) -- (x11) -- (x12) -- (x10);
\draw[line width=1.2pt, dashed] (x12) -- (x11);
\end{tikzpicture}
\end{center}
\caption{Schematic concatenation of an odd number of twisted planar tiles;
	in fact, only one (and an arbitrary one) 
	of the tiles needs to be drawn twisted in this case.}
\label{fig:tilecomposition}
\end{figure}
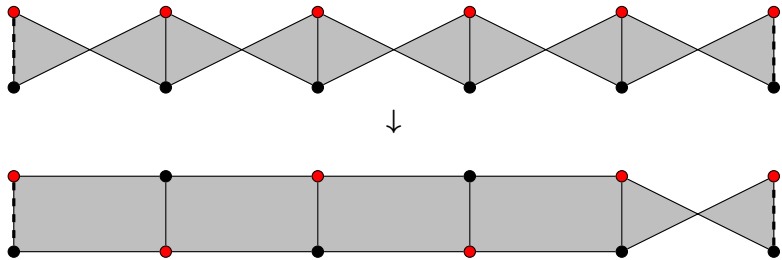

Hence the proof of Theorem~\ref{thm:nokernel} would be finished,
modulo technical details, if we show
that the tile crossing number problem of twisted planar tiles is NP-hard.
This particular question seems to have been latently considered in the
crossing number community for the past several years, and it is still open
nowadays to our best knowledge.
We provide the following affirmative answer by adapting
a construction from the proof
\cite{DBLP:journals/siamcomp/CabelloM13} of Theorem~\ref{thm:CM}:

\begin{theorem}[Corollary~\ref{cor:diagseptile}]
\label{thm:twistedhard}
Let $T$ be a twisted planar tile and $k\geq1$ an integer.
Then it is NP-complete to decide whether there exists a tile drawing 
of $T$ with at most $k$ edge crossings.
Furthermore, the same holds if both the walls of $T$ are of size two and
there exists an edge $e\in E(T)$ such that $T\sem e$ is a planar tile.
\end{theorem}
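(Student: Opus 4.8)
The plan is to adapt the Cabello--Mohar construction underlying Theorem~\ref{thm:CM}, whose output is an almost-planar graph $G$ together with an edge $e$ such that $G\sem e$ is planar and such that $\crg(G)$ encodes a suitable optimization problem (embedding/shortest-path-type problem on a planar graph). The key observation is that in that construction the planar part $G\sem e$ can be equipped with a planar embedding in which the two endpoints $u,v$ of the extra edge $e$ lie on a common face. Cutting the plane open along a curve through that face turns $G\sem e$ into a \emph{planar tile} $T'$ whose two walls are just the (subdivided) boundary arcs near $u$ and~$v$; adding $e$ back as an edge joining $u$ on the left wall to $v$ on the right wall, one obtains a tile $T$. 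Because adding $e$ ``straight across'' the rectangle in the planar tile drawing of $T'$ would require $e$ to cross the tile freely, while adding $e$ after inverting one wall makes $u$ and $v$ adjacent along the rectangle boundary, the tile $T$ is \emph{twisted planar}: after flipping one wall, $e$ can be routed with no crossings. Thus the reduction target is legitimately a twisted planar tile, and moreover both walls have size two (we arrange $u$ and $v$, each with a single ``anchor'' neighbour, to be the two wall vertices) and $T\sem e=T'$ is a planar tile, which gives the ``Furthermore'' clause.

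The next step is to argue that the tile crossing number of $T$ equals (up to an additive/multiplicative constant that the reduction controls) the crossing number of $G$, so that the NP-hardness of deciding $\crg(G)\leq k$ from Theorem~\ref{thm:CM} transfers. For the easy direction, any tile drawing of $T$ with $c$ crossings can be wrapped back into a drawing of $G$ in the plane (gluing the left and right sides of the rectangle and identifying the corresponding wall vertices, which by construction are the same vertices of $G$) with at most $c$ crossings, so $\crg(G)\leq\tcrn(T)$. For the converse one takes an optimal drawing of $G$, chooses a point in a face incident with both $u$ and~$v$ consistently with the way $G\sem e$ was made planar, and cuts the sphere open there; the resulting tile drawing of $T$ realizes $\tcrn(T)\leq\crg(G)$, possibly after a standard normalization argument (redrawing so that no edge crosses the cut curve more than necessary, using that the tile structure only constrains the wall vertices). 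Combining the two inequalities yields $\tcrn(T)=\crg(G)$, and membership in NP is immediate since a witnessing tile drawing (given combinatorially, e.g., by rotation system plus crossing pattern) is polynomial-size and verifiable in polynomial time.

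The main obstacle, and the part that needs the most care, is the converse inequality $\tcrn(T)\leq\crg(G)$: a priori an optimal drawing of $G$ in the plane need not respect the planar structure of $G\sem e$ that we used to define the tile walls, so we cannot simply ``cut and read off'' a tile drawing. This is exactly where the structure of the Cabello--Mohar instances is exploited: in their construction $G\sem e$ is (essentially) $3$-connected or otherwise has a rigid enough planar embedding, so that up to the (bounded) crossings of an optimal drawing, the cyclic order around the relevant face is forced, and the cut through $u$ and $v$ can be made canonically. I would therefore go back to the specific gadgets of \cite{DBLP:journals/siamcomp/CabelloM13}, verify that the planar side is rigid in the required sense (adding extra planar ``rigidifying'' edges if needed, which does not change almost-planarity), and then perform the cut-open argument with a careful count of how crossings of $e$ and of crossed edges near the face interact with the rectangle boundary. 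Everything else (the wrap-up into Theorem~\ref{thm:twistedhard} as stated, i.e.\ walls of size two and $T\sem e$ planar) is then bookkeeping on top of the reduction of \cite{DBLP:journals/siamcomp/CabelloM13}. This is stated later as Corollary~\ref{cor:diagseptile}, so the heavy lifting is done in the preceding ``diagonalized separation'' machinery; here I have only sketched how that machinery yields the twisted planar tile statement.
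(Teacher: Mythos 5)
Your plan diverges from the paper's (which reduces from the \emph{anchored} crossing number, Theorem~\ref{thm:anchoredhard}, not from Theorem~\ref{thm:CM}), and it contains a fatal flaw at the very first step. You assert that $G\sem e$ ``can be equipped with a planar embedding in which the two endpoints $u,v$ of the extra edge $e$ lie on a common face.'' If that were true, $e$ could be drawn inside that face and $G$ would be planar, i.e.\ $\crg(G)=0$; so this premise fails for every instance of Theorem~\ref{thm:CM} that is actually hard (those with $\crg(G)\geq 1$ are exactly the point of the reduction). The cut-open construction therefore cannot be carried out as described. A related structural problem persists even if one repairs the setup: once $u$ sits on the left wall and $v$ on the right wall and nothing else spans the square, the edge $e$ can be routed inside the square along its top or bottom side, avoiding all other edges, so $\tcrn(T)$ collapses and does not encode $\crg(G)$. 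The paper's construction avoids precisely this by inserting a heavy path $Q$ running ``diagonally'' from one wall to the other (Definition~\ref{def:diagonalsep}); its weight guarantees it is never crossed, it separates the square into two regions, and inverting one wall is what forces the two planar halves onto the same side of $Q$, producing the crossings that equal the anchored crossing number.

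Beyond that, the step you yourself flag as the hard one --- the inequality $\tcrn(T)\leq\crg(G)$, which you propose to obtain from an unverified ``rigidity'' of the Cabello--Mohar gadgets --- is exactly where all the content lies, and it is not done. The paper does not need any rigidity of optimal drawings of $G$: it works with anchored graphs consisting of two vertex-disjoint connected planar pieces $H_1,H_2$, attaches them to the thick path $Q$ in the anchor order, and then argues combinatorially (connectivity of $H_i+\{x^1,\sigma'(1)\}$ etc., plus the positions of the wall vertices forced by the inversion) that in any optimal tile drawing both pieces lie on the same side of the uncrossed $Q$, so contracting the thick attachment edges recovers an anchored drawing. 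That argument, together with Proposition~\ref{pro:weighted} to remove the weights and the observation that deleting the single weight-$1$ edge at $x^1$ leaves a planar tile, is what yields the ``furthermore'' clause. As it stands, your proposal is a sketch of a different reduction whose starting point is unsound and whose key inequality is deferred, so it does not constitute a proof.
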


\subparagraph*{Paper organization. }
We provide the necessary formal definitions of the aforementioned concepts
from crossing numbers and parameterized complexity in
Sections~\ref{sec:crossingn},\ref{sec:parameterized}.
Then we prove Theorem~\ref{thm:twistedhard} in Section~\ref{sec:twistedh},
and provide technical claims useful for the next cross-composition
construction in Section~\ref{sec:cross-composing}.
Finally, we summarize the paper and present some additional ideas in
Section~\ref{sec:conclusion}.

\section{Crossing numbers}
%%%%%%%%%%%%%%%%%%%%%%%%%%%%%%%%%%%%%%%%%%%%%%%%%%%%%%%%%%%%%%%%%
\label{sec:crossingn}

We consider multigraphs by default, even though we could always subdivide
parallel edges in order to make the graphs simple.
We follow basic terminology of topological graph theory,
see e.g.~\cite{MT}.
A {\em drawing} of a graph $G$ in the plane is such that, 
the vertices of $G$ are distinct points 
and the edges are simple curves joining their endvertices.
It is required that no edge passes through a vertex, 
and no three edges cross in a common point.
\begin{definition}[crossing number]
The {\em crossing number} $\crg(G)$ of a graph $G$
is the minimum number of crossing points of edges in a drawing of $G$ in the plane.
\end{definition}
Hence, a graph $G$ is planar if and only if~$\crg(G)=0$.
Note that the crossing number is invariant under subdividing edges of~$G$.

A useful concept in crossing numbers research are tiles. 
They were used already
by Kochol \cite{DBLP:journals/dm/Kochol87} and Richter--Thomassen
\cite{DBLP:journals/jct/RichterT93},
although they were formalized only later in the work of Pinnontoan and Richter 
\cite{DBLP:journals/jgt/PinontoanR03,Pinon:IOPORT.02108147}. 
So far, primary use of the tile concept in crossing numbers research concerned
study of so called crossing-critical graphs,
as can be seen also in recent papers such as~\cite{DBLP:conf/gd/BokalBDH15,cit:2critchar}.
% \cite{DBLP:journals/jgt/PinontoanR03,Pinon:IOPORT.02108147}. 
Here we will use tiles in a rather different way.
We briefly sketch the necessary terms as follows.
% In our contribution, we use an extension of their
% formalization from \cite{cit:avgcrit},
% which we also briefly sketch here.

A {\em tile} is a triple $T=(G,\lambda, \rho)$
where $\lambda,\rho\in V(G)^*$ are two disjoint sequences of distinct
vertices of~$G$, called the {\em left and right wall} of~$T$, respectively.
A {\em tile drawing} of $T$ is a drawing of the underlying graph $G$ in the
unit square such that the vertices of $\lambda$ occur in this order on the
left side of the square and those of $\rho$ in this order on the right side
of it.
The {\em tile crossing number $\tcrn(T)$} of a tile $T$ is the
minimum number of crossing points of edges over all tile drawings of $T$.
The {\em right-inverted} tile $T^{\updownarrow}$ is the tile 
$(G, \lambda, \bar{\rho})$ and the {\em left-inverted} 
tile $^{\updownarrow} T$ is $(G, \bar{\lambda}, \rho)$,
where $\bar{\lambda}$ and $\bar{\rho}$ denote the inverted sequences
of~$\lambda,\rho$. 

For simplicity, in this brief exposition, we shall assume that all 
tiles involved in one construction satisfy 
$|\lambda|=|\rho|=w$ for suitable $w\geq2$
(though, a more general treatment is obviously possible).
The {\em join of two tiles} $T=(G,\lambda, \rho)$ and $T'=(G', \lambda', \rho')$ 
is defined as the tile $T \otimes T':=(G'', \lambda, \rho')$, where 
$G''$ is the graph obtained from the disjoint union of $G$ and $G'$,
by identifying $\rho(i)$ with $\lambda'(i)$ for $i=1,\ldots,w$. 
% Specially, if $\rho_i=\lambda'_i$ is a vertex of degree~$2$ (after the
% identification), we replace it with a single edge in $G''$.
Since the operation $\otimes$ is associative, we can safely define the
{join of a sequence of tiles} $\ca T=(T_1,T_2, \ldots, T_m)$ as
the tile given by 
$\otimes \ca T=T_1 \otimes T_2 \otimes \ldots \otimes T_m$.

A tile $T=(G, \lambda, \rho)$ is {\em planar} if $\tcrn(T)=0$,
and $T$ is {\em twisted planar} if $\tcrn(T^{\updownarrow})=0$
(which is clearly equivalent to $\tcrn({}^{\updownarrow}T)=0$\,).
We briefly illustrate these definitions (also
Figure~\ref{fig:tilecomposition}):
\begin{example}
\label{ex:tilesjoincr}
Let $\ca T=(T_1,T_2, \ldots, T_m)$ be a sequence of twisted planar tiles
$T_i$, $i=1,\dots,m$.
Then $\tcrn(\otimes \ca T)=0$ if $m$ is even, and
$\tcrn(\otimes \ca T)\leq\min_{i\in\{1,\dots,m\}}\tcrn(T_i)$ otherwise.
\end{example}

Finally, the following is a useful artifice 
in crossing numbers research. In a {\em weighted} graph, each edge is
assigned a positive number (the {\em weight, or thickness} of the edge). 
Now the {crossing number} is defined as in the ordinary case,
but a crossing point between edges $e_1$ and $e_2$,
say of weights $t_1$ and $t_2$, contributes $t_1\cdot t_2$ to the result.
In the case of integer weights, this extension can be easily seen equivalent
to the unweighted setting as follows:
\begin{proposition}[folklore]
\label{pro:weighted}
Let $G$ be an integer-weighted graph, $F\subseteq E(G)$, 
and $G^+$ be constructed from $G$ via replacing each edge $e\in F$ 
of weight $t$ with a bunch of $t$ parallel edges of weight~$1$.
Then $\crg(G)=\crg(G^+)$.
Moreover, if $G$ is the graph of a tile $T$ and $T^+$ is the corresponding
tile based on $G^+$, then $\tcrn(T)=\tcrn(T^+)$.
\end{proposition}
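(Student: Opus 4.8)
The plan is to establish the two inequalities $\crg(G^+)\le\crg(G)$ and $\crg(G)\le\crg(G^+)$ separately; the former (``thickening'') direction is routine, while the latter (``bundling'') direction is where the small amount of real work lies. For $\crg(G^+)\le\crg(G)$, start from an optimal weighted drawing of $G$, and for each $e\in F$ of weight $t$ replace the curve of $e$ by a bunch of $t$ pairwise non-crossing parallel copies running inside an arbitrarily thin tube that follows the original curve, shares the two endvertices of $e$, and meets no other vertex. A former crossing between two edges of weights $t_1$ and $t_2$ then becomes $t_1t_2$ unit crossings when both edges were thickened, and $t_2$ crossings of weight $t_1$ when only the second one was (and symmetrically); in every case the weighted crossing count of the pair is preserved, so the new drawing witnesses $\crg(G^+)\le\crg(G)$.

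For the converse, take an optimal drawing $D$ of $G^+$ and process the edges of $F$ one by one. Fix $e\in F$ of weight $t$ with parallel copies $e^1,\dots,e^t$ in $G^+$, and for each $i$ let $c_i$ be the weighted number of crossings in the current drawing between $e^i$ and the edges other than $e^1,\dots,e^t$; choose $j$ with $c_j$ minimum. Reroute every $e^i$ with $i\ne j$ as a curve inside a sufficiently thin tube around $e^j$, keeping these copies pairwise disjoint and disjoint from $e^j$ except at the common endvertices (a small perturbation restores general position if needed). Afterwards the copies $e^1,\dots,e^t$ pairwise do not cross, and each of them crosses the outside edges exactly as $e^j$ does, so the weighted contribution of this bunch drops from $\sum_i c_i$ to $t\,c_j\le\sum_i c_i$ while no other crossing changes; hence the weighted crossing number does not increase. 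Iterating over all of $F$ --- where at each step the already-bundled copies of the earlier $F$-edges simply count as ordinary unit-weight outside edges --- we reach a drawing in which, for every $e\in F$, its $t$ copies form a non-crossing parallel bundle with a single common crossing pattern. Contracting each such bundle back to a single edge of weight $t$ turns this into a weighted drawing of $G$ of weighted crossing number at most $\crg(G^+)$, which gives $\crg(G)\le\crg(G^+)$ and, together with the first part, $\crg(G)=\crg(G^+)$.

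Finally, the tile statement follows by running exactly the same two arguments inside the unit square instead of the plane: both the thickening and the tube-rerouting-and-contracting can be performed without moving any vertex, so in particular the wall vertices stay in their prescribed order on the left and right sides of the square, and tile drawings are mapped to tile drawings throughout; hence $\tcrn(T)=\tcrn(T^+)$. The only genuinely delicate point in the whole argument is the bundling step of the converse inequality --- namely, checking that rerouting the copies along the cheapest one creates no new crossings and that the weighted bookkeeping across several interacting bundles comes out exactly right --- and I would write that accounting out in full; everything else is standard manipulation of drawings.
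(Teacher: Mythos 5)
Your proof is correct, and it is the standard ``split into thin parallel copies / re-bundle along the cheapest copy'' argument; the paper itself labels Proposition~\ref{pro:weighted} as folklore and gives no proof, so there is nothing in the source to diverge from. The one point worth writing out carefully, as you already note, is the bundling direction: that rerouting all copies of a bunch into a thin tube around the cheapest copy $e^j$ does not create new crossings near the shared endvertices and that the bookkeeping $t\,c_j\le\sum_i c_i$ remains valid when several bunches are processed in sequence --- both of which your sketch handles correctly.
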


\section{Parameterized complexity and kernelization}
%%%%%%%%%%%%%%%%%%%%%%%%%%%%%%%%%%%%%%%%%%%%%%%%%%%%%%%%%%%%%%%%%
\label{sec:parameterized}

Here we introduce the relevant concepts of parameterized complexity theory.
For more details, we refer to textbooks~\cite{DBLP:series/txcs/DowneyF13,fg06}.
Let $\Sigma$ be a finite alphabet.
A~parameterized problem over $\Sigma$ is a language 
$\ca A\subseteq \Sigma^*\times\mathbb N$.
An instance of $\ca A$ is thus a pair $\tuple{x,k}$ where $x$ is
the input and $k\geq0$ an (integer) parameter.  
In our case, e.g., $\tuple{G,k}$ is the crossing number instance
``$\crg(G)\leq k$''.
A~parameterized problem is \emph{fixed-parameter tractable} (FPT) if every instance
$\tuple{x,k}$ can be solved in time $f(k)\cdot|x|^c$, where $f$ is a
computable function and $c$ is a constant.
% $\FPT$ denotes the class of all fixed-parameter tractable problems.  

A hot research direction in the area of parameterized complexity of the past
decade is that of kernelization.
A \emph{kernelization}
for a parameterized problem~$\ca A$ is an algorithm that takes an instance
$\tuple{x,k}$ of $\ca A$ and, in time polynomial in $|x|+k$, 
maps $\tuple{x,k}$ to an equivalent 
instance $\tuple{x',k'}$ of $\ca A$ such that $|x'|+k'\leq f(k)$
where $f$ is a computable function.
The output $\tuple{x',k'}$ is called the \emph{kernel}. 
We say that $\ca A$ has a \emph{polynomial kernel} if
there is a kernelization for $\ca A$ such that $f$ is a polynomial.  
Every fixed-parameter tractable problem admits a kernel, 
but not necessarily a polynomial kernel. %~\cite{Caietal93}.

We now describe the basic {\sc or}-cross-composition framework of
\cite{DBLP:journals/siamdm/BodlaenderJK14}.
An equivalence relation $\sim$ on $\Sigma^*$ is called a polynomial
equivalence if, for any $x,y\in\Sigma^*$, we can decide in polynomial time
whether $x\sim y$ and, moreover,
on any finite $S\subseteq\Sigma^*$ the relation $\sim$ defines a number of
equivalence classes which is polynomially bounded in the size of a largest
element of~$S$.
For our purpose, $\sim$ will group together the tile crossing number
instances of the same objective value~$k$.
\begin{definition}[{\sc or}-cross-composition]
\label{def:crosscomposition}
Let $\ca L\subseteq\Sigma^*$ be a language, 
$\sim$ be a polynomial equivalence relation on $\Sigma^*$,
and let $\ca A\subseteq \Sigma^*\times\mathbb N$ be a parameterized problem.
An {\em{\sc or}-cross-composition} of $\ca L$ into $\ca A$
is an algorithm that, given $t$ instances
$x_1,x_2,\dots,x_t\in\Sigma^*$ of $\ca L$ such that
$x_1\sim x_2\sim\dots\sim x_t$;
\begin{itemize}
\item in time polynomial in $|x_1|+\dots|x_t|$ it outputs an instance
$\tuple{y_0,k_0}\in \Sigma^*\times\mathbb N$ such that
$k_0$ is polynomially bounded in $\max_i|x_i|+\log t$, and
\item $\tuple{y_0,k_0}\in\ca A$ if and only if $x_i\in\ca L$ for some 
$1\leq i\leq t$.
\end{itemize}
\end{definition}

\begin{theorem}[Bodlaender, Jansen and Kratsch~\cite{DBLP:journals/siamdm/BodlaenderJK14}]
\label{thm:nopolykernel}
If an NP-hard language $\ca L$ has an
{\sc or}-cross-composition into the parameterized problem $\ca A$,
then $\ca A$ does not admit a polynomial kernel unless \nokernelhypo.
\end{theorem}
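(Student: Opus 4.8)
\textit{Proof idea.}
The plan is to establish the contrapositive: assuming $\ca A$ admits a polynomial kernel, I would build an \emph{{\sc or}-distillation} of $\ca L$ into a suitable auxiliary language $\ca R$, and then invoke the distillation lower bound of Fortnow and Santhanam~\cite{DBLP:journals/jcss/FortnowS11}, by which an NP-hard language admitting an {\sc or}-distillation into \emph{any} language forces $\mathrm{coNP}\subseteq\mathrm{NP/poly}$, equivalently \nokernelhypo. Recall that an {\sc or}-distillation of $\ca L$ is a polynomial-time algorithm that takes \emph{arbitrary} inputs $x_1,\dots,x_t$ of length at most $n$ and outputs a single string of length polynomial in~$n$ (crucially, independent of~$t$) which lies in $\ca R$ if and only if $x_i\in\ca L$ for some~$i$.

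Concretely, I would fix a polynomial-size kernelization $K$ of $\ca A$ with size bound a polynomial $\kappa$, and a polynomial $p$ witnessing the parameter bound of the {\sc or}-cross-composition. Given $x_1,\dots,x_t$ of length $\le n$, I would first discard duplicate inputs, so that $t\le|\Sigma|^{n+1}$ and hence $\log t=O(n)$; this does not change the value of the {\sc or}. Then I would partition the $x_i$ into the classes $S_1,\dots,S_r$ of the polynomial equivalence relation~$\sim$, where $r$ is polynomially bounded in~$n$ by the definition of~$\sim$. To each class $S_j$ I would apply the {\sc or}-cross-composition, producing an instance $\tuple{y_j,k_j}$ of $\ca A$ with $k_j\le p(n+\log t)=\mathrm{poly}(n)$ and with $\tuple{y_j,k_j}\in\ca A$ exactly when $S_j$ contains a member of~$\ca L$; then I would run $K$ on $\tuple{y_j,k_j}$ to obtain an equivalent instance $\tuple{y_j',k_j'}$ of total size $\le\kappa(k_j)=\mathrm{poly}(n)$.

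Finally I would output the $\#$-separated concatenation $Y=\tuple{y_1',k_1'}\#\cdots\#\tuple{y_r',k_r'}$, which has length at most $r\cdot\mathrm{poly}(n)=\mathrm{poly}(n)$, independent of~$t$, and I would let $\ca R$ be the language of all such lists in which at least one entry encodes a yes-instance of~$\ca A$. By construction $Y\in\ca R$ iff some $\tuple{y_j',k_j'}\in\ca A$ iff some class $S_j$ meets~$\ca L$ iff some $x_i\in\ca L$; and all the steps above (de-duplication, testing $\sim$ on the $\le t^2$ pairs, one cross-composition and one kernelization per class, concatenation) run in polynomial time. Hence this is an {\sc or}-distillation of $\ca L$ into~$\ca R$, and since $\ca L$ is NP-hard the Fortnow--Santhanam theorem yields \nokernelhypo, contradicting the hypothesis of the ``unless'' clause.

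The genuinely hard ingredient here is the Fortnow--Santhanam distillation lower bound itself — the complexity-theoretic argument that distilling many SAT instances into one short instance would collapse $\mathrm{coNP}$ into $\mathrm{NP/poly}$ — which I would take as a black box; everything else is routine bookkeeping, whose only delicate points are (i) absorbing $\log t$ into $O(n)$ via de-duplication of inputs, and (ii) exploiting the polynomial bound on the number of $\sim$-classes so that the final output length stays polynomial in~$n$ rather than depending on~$t$.
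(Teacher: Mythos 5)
The paper offers no proof of this statement: it is imported as a black-box result of Bodlaender, Jansen and Kratsch~\cite{DBLP:journals/siamdm/BodlaenderJK14} (the authors only remark that the cited claim is in fact stronger), so there is no in-paper argument to compare against. Your sketch correctly reproduces the standard proof from that reference --- deduplicate so that $\log t=O(n)$, split the inputs by $\sim$-classes, cross-compose and then kernelize each class, concatenate the polynomially many poly-size kernels, and invoke the Fortnow--Santhanam {\sc or}-distillation lower bound --- so it matches the intended argument rather than taking a different route.
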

We remark in passing that the full claim of
\cite{DBLP:journals/siamdm/BodlaenderJK14} is even stronger than stated
Theorem~\ref{thm:nopolykernel}, and in particular it also excludes the
existence of a so-called polynomial compression of~$\ca A$.

\section{Twisted planar tiles}
%%%%%%%%%%%%%%%%%%%%%%%%%%%%%%%%%%%%%%%%%%%%%%%%%%%%%%%%%%%%%%%%%
\label{sec:twistedh}

For the purpose of our proof,
we are especially interested in the following kind of integer-weighted planar tiles.
See Figure~\ref{fig:diagonalsep} for an illustration.

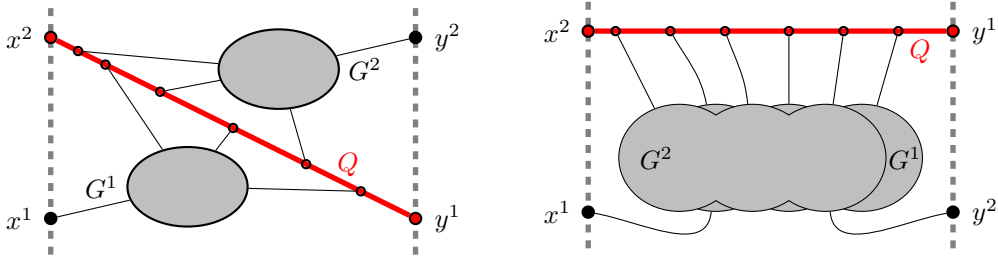
\begin{figure}[t]
\begin{center}\bigskip
\begin{tikzpicture}[scale=1.2]
\normalsize
\tikzstyle{every node}=[draw, thick, shape=circle, minimum size=3pt,inner sep=1.5pt, fill=black]
\tikzstyle{every path}=[color=black]
\draw[line width=2pt, dashed, color=gray] (0,-0.4) -- (0,2.4);
\draw (0,0) node[label=left:$x^1$] (x1) {};
\draw (0,2) node[label=left:$x^2$, fill=red] (x2) {};
\draw[line width=2pt, dashed, color=gray] (4,-0.4) -- (4,2.4);
\draw (4,2) node[label=right:$y^2$] (y2) {};
\draw (4,0) node[label=right:$y^1$, fill=red] (y1) {};
\draw[line width=2pt, color=red] (x2) -- (y1);
\tikzstyle{every node}=[draw, thick, shape=circle, minimum size=2pt,inner sep=1.1pt, fill=none]
\draw (1.5,0.35) node[ellipse, minimum width=45pt, minimum height=30pt,
		fill=lightgray, label=left:$G^1$] (G1) {};
\draw (2.5,1.65) node[ellipse, minimum width=45pt, minimum height=30pt,
		fill=lightgray, label=right:$G^2$] (G2) {};
\draw (G1) -- (x1) ; \draw (G2) -- (y2) ;
\draw (2,1) node (d1) {}; \draw (0.3,1.85) node (d6) {};
\draw (2.8,0.6) node[label=right:{\color{red}~~$Q$}] (d2) {};
\draw (3.4,0.3) node (d3) {};
\draw (1.2,1.4) node (d4) {}; \draw (0.6,1.7) node (d5) {};
\draw (d1) -- (G1) -- (d3) ; \draw (d5) -- (G1) ;
\draw (d2) -- (G2) -- (d4) ; \draw (d6) -- (G2) ;
\end{tikzpicture}
\qquad
\begin{tikzpicture}[scale=1.2]
\normalsize
\tikzstyle{every node}=[draw, thick, shape=circle, minimum size=3pt,inner sep=1.5pt, fill=black]
\tikzstyle{every path}=[color=black]
\draw[line width=2pt, dashed, color=gray] (0,-0.4) -- (0,2.4);
\draw (0,0) node[label=left:$x^1$] (x1) {};
\draw (0,2) node[label=left:$x^2$, fill=red] (x2) {};
\draw[line width=2pt, dashed, color=gray] (4,-0.4) -- (4,2.4);
\draw (4,0) node[label=right:$y^2$] (y2) {};
\draw (4,2) node[label=right:$y^1$, fill=red] (y1) {};
\draw[line width=2pt, color=red] (x2) -- (y1);
\tikzstyle{every node}=[draw, thick, ellipse, minimum width=45pt,
			minimum height=40pt, fill=lightgray]
\draw (1.4,0.6) node (G1) {};\draw (2.2,0.6) node (G1) {};\draw (3.0,0.6) node (G1) {};
\draw (1.4,0.6) node[draw=none] (G1a) {};\draw (2.2,0.6) node[draw=none] (G1b) {};
\draw (3.0,0.6) node[draw=none] (G1c) {$G^1$\hspace*{-7.5ex}};
\draw (1.0,0.6) node (G2) {};\draw (1.8,0.6) node (G2) {};\draw (2.6,0.6) node (G2) {};
\draw (1.0,0.6) node[draw=none] (G2a) {\hspace*{-4ex}$G^2$};
\draw (1.8,0.6) node[draw=none] (G2b) {};\draw (2.6,0.6) node[draw=none] (G2c) {};
\tikzstyle{every node}=[draw, thick, shape=circle, minimum size=2pt,inner sep=1.1pt, fill=none]
\draw (G1a) to [in=350,out=265] (x1) ;
\draw (G2c) to [in=190,out=275] (y2) ;
\draw (2.2,2) node (d1) {}; \draw (0.3,2) node (d6) {}; \draw (2.8,2) node (d2) {};
\draw (3.4,2) node[label=below:{\color{red}~$Q$\hspace*{-3ex}}] (d3) {};
\draw (1.5,2) node (d4) {}; \draw (0.9,2) node (d5) {};
\draw (d1) -- (G1b) ; \draw (d3) -- (G1c) ;
\draw (d5) to [in=100,out=300] (G1a) ;
\draw (d2) -- (G2c) ; \draw (d6) -- (G2a) ;
\draw (d4) to [in=95,out=290] (G2b) ; 
\end{tikzpicture}
\end{center}
\caption{A diagonally separated tile and a possible drawing of 
	the corresponding right-inverted tile.
	The underlying graph of this tile contains two vertex-disjoint 
	subgraphs $G^1,G^2$ such that $V(G^1)\cup V(G^2)=V(G)\sem\{x^2,y^1\}$,
	and their drawings ``overlay'' each other on the right.}
% 	see Definition~\ref{def:diagonalsep}.}
\label{fig:diagonalsep}
\end{figure}

\begin{definition}[diagonally separated tile]
\label{def:diagonalsep}
Consider an integer-weighted planar tile $T=(G,\lambda, \rho)$ where
the walls are $\lambda=(x^1,x^2)$ and $\rho=(y^1,y^2)$ for some distinct
$x^1,x^2,y^1,y^2\in V(G)$.
We say that $T$ is {\em diagonally separated}
if we can write $G=G^1\cup G^2\cup Q$ such that
\begin{itemize}
\item $G^1,G^2$ are vertex-disjoint subgraphs of $G$ such that
$V(G^1)\cup V(G^2)=V(G)\sem\{x^2,y^1\}$,
\item $E(Q)=E(G)\sem(E(G^1)\cup E(G^2))$, ~$x^1,y^2\not\in V(Q)$,
and $Q$ is a ``thick'' path from $x^2$ to $y^1$ having each edge
of weight $t\geq w_1\cdot w_2+1$ where $w_i$ is the sum of weights of all
the edges of the subgraph $G^i\sem V(Q)$,
\item  $G$ is connected and both 
$G^1\sem V(Q)$ and $G^2\sem V(Q)$ are connected subgraphs,
and no edge of $G^1\cup G^2$ has both ends in~$V(Q)\cup\{x^1,y^2\}$,
\item
$x^1\in V(G^1)\sem V(Q)$ and $y^2\in V(G^2)\sem V(Q)$,
both the vertices $x^1,y^2$ are of degree one in~$G$
and the two incident edges have weight~$1$.
\end{itemize}
\end{definition}

Twisted diagonally separated planar tiles have the suitable ``or-composability''
property:
\begin{lemma}
\label{lem:twistedorcompose}
Let $\ca T=(T_1^{\updownarrow},T_2^{\updownarrow}, \ldots, T_m^{\updownarrow})$ 
be a sequence of tiles such that, for $i=1,\dots,m$, $T_i$~is 
a diagonally separated planar tile.
Let $U:=\otimes\ca T$ if $m$ is odd, 
and $U:=(\otimes\ca T)^{\updownarrow}$ otherwise.
Then 
$\tcrn(U) = \min_{i\in\{1,\dots,m\}} \tcrn(T_i^{\updownarrow}) \,.$
\end{lemma}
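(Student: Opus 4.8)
The plan is to prove the two inequalities $\tcrn(U)\le\min_i\tcrn(T_i^{\updownarrow})$ and $\tcrn(U)\ge\min_i\tcrn(T_i^{\updownarrow})$ separately. The first follows from an explicit concatenated drawing and is essentially Example~\ref{ex:tilesjoincr} with one non-planar slot; the second, the harder direction, comes from ``cutting'' one tile out of a crossing-optimal drawing of $U$, exploiting the heavy path that runs through all the tiles.

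\emph{Upper bound.} Pick an index $j$ attaining $\min_i\tcrn(T_i^{\updownarrow})$ and build a tile drawing of $U$ slot by slot: slot~$j$ receives a crossing-optimal tile drawing of $T_j^{\updownarrow}$, and every other slot~$i$ a crossing-free tile drawing of the planar tile $T_i$, which we may additionally reflect vertically. A planar slot presents the two vertices of each of its walls in one of two orders, and a vertical reflection reverses both orders at once; slot~$j$, by contrast, presents its left and its right side in \emph{opposite} orders. Since at every seam the join identifies the first vertex of one wall with the second vertex of the other, a left-to-right induction choosing the reflection of each planar slot makes all $m-1$ seams consistent, and the single ``order flip'' contributed by slot~$j$ is exactly what then also matches the two outer walls of $U$ — both when $m$ is odd, so $U=\otimes\ca T$, and when $m$ is even, so $U=(\otimes\ca T)^{\updownarrow}=T_1^{\updownarrow}\otimes\cdots\otimes T_{m-1}^{\updownarrow}\otimes T_m$. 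This drawing of $U$ has exactly the $\tcrn(T_j^{\updownarrow})$ crossings occurring inside slot~$j$.

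\emph{Lower bound.} Let $D$ be a crossing-optimal tile drawing of $U$, with $c:=\tcrn(U)$ crossings. Because the right end $y_i^1$ of the thick path $Q_i$ of $T_i$ is identified with the left end $x_{i+1}^2$ of $Q_{i+1}$ in the join, the paths $Q_1,\dots,Q_m$ chain into one thick path $Q$ in $U$ from $x_1^2$ to $y_m^1$, both ends lying on the boundary of the drawing rectangle. By the upper bound $D$ has at most $\min_i\tcrn(T_i^{\updownarrow})$ crossings, and the weight prescribed in Definition~\ref{def:diagonalsep} for an edge of $Q_i$ exceeds $\tcrn(T_i^{\updownarrow})$ for each $i$ (as a direct overlay drawing of the two sides of $T_i^{\updownarrow}$ shows); hence no edge of $Q$ is crossed in $D$, so $Q$ is drawn as a simple arc splitting the rectangle into two regions. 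Each of the connected subgraphs $A_i:=G_i^1\sem V(Q_i)$ and $B_i:=G_i^2\sem V(Q_i)$ is vertex-disjoint from $V(Q)$ and therefore drawn entirely in one of the two regions; moreover $A_{i+1}$ and $B_i$ share the seam vertex $y_i^2=x_{i+1}^1$ — which has degree two in $U$, being a degree-one vertex of each of $G_i$, $G_{i+1}$ — and so lie in the same region, while $A_1$ contains the boundary vertex $x_1^1$ and $B_m$ the boundary vertex $y_m^2$. Confronting these boundary positions with the strictly alternating side-pattern that would result if $A_i$ and $B_i$ lay in opposite regions for every $i$ gives a parity contradiction (in both parities of $m$); hence some slot~$j$ has $A_j$ and $B_j$ in the \emph{same} region of $Q$, i.e.\ its part of $D$ is drawn ``twisted''. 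Cutting the sub-drawing of $G_j$ out of $D$ — along $Q_j$ and along two arcs joining the two seam vertices bounding slot~$j$ — produces a tile drawing of $T_j^{\updownarrow}$ (or of ${}^{\updownarrow}T_j$, of equal tile crossing number) whose crossings are among those of $D$, hence at most~$c$. Therefore $\tcrn(T_j^{\updownarrow})\le c$, and with $c\le\min_i\tcrn(T_i^{\updownarrow})\le\tcrn(T_j^{\updownarrow})$ all these quantities coincide.

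\emph{Main obstacle.} The delicate step is the cutting in the last paragraph: producing a closed curve that bounds exactly the sub-drawing of the twisted slot $G_j$, avoids all of $E(G_j)$, and meets $x_j^1,x_j^2,y_j^1,y_j^2$ in the cyclic order of a tile drawing of $T_j^{\updownarrow}$. Every clause of Definition~\ref{def:diagonalsep} enters here: connectivity of $G_i^1\sem V(Q_i)$ and $G_i^2\sem V(Q_i)$ keeps the two sides of each slot in one piece; the degree-one pendants $x_i^1,y_i^2$ make the seam vertices degree two and hence cleanly separable; and forbidding an edge with both ends in $V(Q_i)\cup\{x_i^1,y_i^2\}$ keeps the cutting arcs unobstructed near the ends of $Q_j$. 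I expect the bulk of the technical work to be this local untangling around the seam vertices.
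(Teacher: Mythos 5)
Your overall strategy coincides with the paper's: the explicit concatenated drawing for the upper bound, the heavy chained path $Q$ that cannot be crossed, the component/parity argument locating a slot $j$ whose two sides lie in the same region of the uncrossed arc $Q$, and the extraction of that slot as a tile drawing of $T_j^{\updownarrow}$. The first three steps are correct as you present them. The gap is exactly where you place it yourself: the extraction. You require a closed curve that bounds the sub-drawing of $G_j$, passes through $x_j^1,x_j^2,y_j^1,y_j^2$ in the right cyclic order, and \emph{avoids all of} $E(G_j)$ --- and you give no reason such a curve exists. The two wall vertices $x_j^2,y_j^1$ already sit on $Q$, but the seam vertices $x_j^1$ and $y_j^2$ sit in the open region, and an arc taking them out to $Q$ (or to the square's boundary) may be forced to cross edges of $A_j$ or $B_j$, which can be drawn anywhere in that region. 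Declaring this ``local untangling'' to be technical work still to be done leaves the lower bound unproved.

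The paper resolves it with a different device that removes the need for crossing-free arcs: it \emph{prolongs the two weight-$1$ pendant edges} of $T_j$ incident with $x_j^1$ and $y_j^2$ along paths $P^1\subseteq G^2_{j-1}$ and $P^2\subseteq G^1_{j+1}$ (these exist by the connectivity clause of Definition~\ref{def:diagonalsep}, and lie in the correct region because $H_{j-1}=G^2_{j-1}\cup G^1_{j}$ and $H_{j}=G^2_{j}\cup G^1_{j+1}$ are connected components of $U\sem V(Q)$) until they reach internal vertices of the uncrossed subpaths $Q_{j-1}$ and $Q_{j+1}$. After this, all four wall points of the extracted tile lie on the uncrossed arc $Q$ in the order $x_j^1,x_j^2,y_j^1,y_j^2$, so one side of $Q$ directly serves as the tile's square; and any crossing the prolonged weight-$1$ edges pick up from an edge of $G_j$ is in bijection with a crossing already present in $D$ between that edge and $P^1$ or $P^2$, so the extracted drawing has at most $\tcrn(U)$ crossings. (For $j=1$ or $j=m$ one uses the left or right wall of $U$ directly.) Without this idea, or some substitute for it, your final step does not go through.
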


\begin{proof}
Let the underlying graph of $T_i$ be $G^1_i\cup G^2_i\cup Q_i$, as
anticipated by Definition~\ref{def:diagonalsep},
and let $t_i$ be the weight of~$E(Q_i)$.
Let $(x^1_i,x^2_i)$, $(y^2_i,y^1_i)$ be the left and right walls,
respectively, of $T_i^{\updownarrow}$.
% (recall that Definition~\ref{def:diagonalsep} applies
% to $T_i^{\updownarrow}$ and so the right wall is inverted).
By the definition of join $\otimes$, $y^1_i=x^2_{i+1}$ and hence
$Q:=Q_1\cup\dots\cup Q_m$ is a path from $x^2_1$ to $y^1_m$.
Similarly, $y^2_i=x^1_{i+1}$, and so $H_i:=G^2_{i}\cup G^1_{i+1}$ is a connected
component of $U\sem V(Q)$ for $i=1,\dots,m-1$.
See Figure~\ref{fig:diagonalsepU}.
For simplicity, we let $H_0:=G^1_1$ and $H_m:=G^2_m$ which are also
components of $U\sem V(Q)$.

\begin{figure}[t]
\begin{center}%\bigskip
\def\tileI#1#2#3#4{%
\begin{tikzpicture}[#1]
\tikzstyle{every node}=[draw, thick, shape=circle, minimum size=2pt,inner sep=1pt, fill=black]
\tikzstyle{every path}=[color=black]
\draw[line width=1.2pt, dashed, color=white] (4,-0.4) -- (4,2.4);
#2
\draw (0,0) node (x1) {};
\draw (0,2) node[fill=red] (x2) {};
\draw (4,2) node (y2) {};
\draw (4,0) node[fill=red] (y1) {};
\draw[line width=1.2pt, color=red] (x2) -- (y1);
\tikzstyle{every node}=[draw, thick, shape=circle, minimum size=2pt,inner sep=1pt, fill=none]
\draw (1.3,0.2) node[ellipse, minimum width=15pt, minimum height=10pt] (G1) {#3};
\draw (2.7,1.8) node[ellipse, minimum width=15pt, minimum height=10pt] (G2) {#4};
\draw (G1) -- (x1) ; \draw (G2) -- (y2) ;
\draw (2,1) node (d1) {}; \draw (0.3,1.85) node (d6) {};
\draw (2.8,0.6) node (d2) {};
\draw (3.4,0.3) node (d3) {};
\draw (1.2,1.4) node (d4) {}; \draw (0.6,1.7) node (d5) {};
\draw (d1) -- (G1) -- (d3) ; \draw (d5) -- (G1) ;
\draw (d2) -- (G2) -- (d4) ; \draw (d6) -- (G2) ;
\end{tikzpicture}
}\small
\tileI{scale=0.6}{\draw[line width=2pt, dashed, color=gray]
	(0,-0.4) -- (0,2.4);
	\draw (0,0) node[label=left:$x_1^1$] {};
	\draw (0,2) node[label=left:$x_1^2$] {};
	}{$G_1^1$}{$G_1^2$}\hspace*{-7pt}%
\tileI{xscale=-0.6,yscale=0.6}{}{$G_2^2$}{$G_2^1$}\hspace*{-7pt}%
\tileI{scale=0.6}{}{$G_3^1$}{$G_3^2$}\hspace*{-7pt}%
\tileI{xscale=-0.6,yscale=0.6}{}{$G_4^2$}{$G_4^1$}\hspace*{-7pt}%
\tileI{scale=0.6}{\draw[line width=2pt, dashed, color=gray] 
	(4,-0.4) -- (4,2.4);
	\draw (4,2) node[label=right:$y_5^2$] {};
	\draw (4,0) node[label=right:$y_5^1$] {};
	}{$G_5^1$}{$G_5^2$}\hspace*{-7pt}%
\end{center}
\caption{The planar tile $U^{\updownarrow}$ where $U$ for $m=5$ is 
	from the statement of Lemma~\ref{lem:twistedorcompose}.}
\label{fig:diagonalsepU}
\end{figure}%

It clearly holds 
$\tcrn(U) \leq \min_{i\in\{1,\dots,m\}} \tcrn(T_i^{\updownarrow})$.
Furthermore, we claim that $t_i>\tcrn(T_i^{\updownarrow})$ and so 
$t_i>\tcrn(U)$ for each $i=1,\dots,m$.
Since $T_i$ is planar, each of $G^1_i,G^2_i$ has a plane embedding
in which the vertices adjacent to $V(Q)\cup\{x_i^1,y_i^2\}$ lie on the outer face.
Consequently, there is a tile drawing of $T_i^{\updownarrow}$ with
each of $G^1_i,G^2_i$ plane and crossings only between
the edges of $G^1_i$ and of $G^2_i$ that are not incident to~$Q$.
See Figure~\ref{fig:diagonalsep} right.
By standard arguments, we may assume that no two edges cross more than once
in this drawing and so 
$\tcrn(T_i^{\updownarrow})\leq w_i^1\cdot w_i^2\leq t_i-1$
where $w_i^j$ is the sum of weights of all
the edges of~$G_i^j\sem V(Q)$.

From $t_i>\tcrn(U)$ for $i=1,\dots,m$ we get that
no edge of $Q$ is ever crossed in an optimal tile drawing of~$U$.
We may hence properly define, in any optimal tile drawing of $U$ and for each
subgraph~$H_i$, whether whole $H_i$ lies (is drawn) {\em above} or {\em below}~$Q$.
We aim to show that there always exists $i\in\{1,\dots,m\}$ such that
$H_{i-1},H_i$ are drawn on the same side of $Q$, either both above or both
below~$Q$.

Assume the contrary.
Then $H_0$ is drawn below $Q$ by the left wall $(x^1_1,x^2_1)$ of $U$.
Next, $H_1$ is drawn above, $H_2$ below, \dots, and finally, 
$H_m$ should be drawn above $U$ if $m$ is odd and below $U$ otherwise.
That is exactly the opposite position to what is requested by the right wall
of~$U$ which is $(y^2_m,y^1_m)$ if $m$ is odd
and $(y^1_m,y^2_m)$ otherwise, a contradiction.

So, $H_{i-1}$ and $H_i$ are drawn on the same side of $Q$ for some
$i\in\{1,\dots,m\}$.
First assume that $1<i<m$.
By supposed connectivity of $G_{i-1}$ there is a path 
$P^1\subseteq G^2_{i-1}$ from $x^1_i=y^2_{i-1}$ to an internal vertex of
$Q_{i-1}$, and similarly,
there is a path $P^2\subseteq G^1_{i+1}$ 
from $y^2_{i}=x^1_{i+1}$ to an internal vertex of $Q_{i+1}$
by connectivity of $G_{i+1}$.
Let $D$ be the drawing obtained from a considered optimal tile drawing of
$U$ restricted to~$T_i$,
by prolonging the single weight-$1$ edge incident with $x^1_i$ along $P^1$
and the single edge incident with $y^2_i$ along $P^2$.
Since whole $Q$ is uncrossed, the paths $Q_{i-1},Q_{i+1}$ can play the role
of the left and right wall of~$D$, and hence $D$ is a
valid tile drawing of~$T_i$ having no more crossings than~$\tcrn(U)$.

If $i=1$ or $i=m$, then we directly use the left or the right wall of $U$ in the
previous argument.
Consequently, $\min_{i\in\{1,\dots,m\}} \tcrn(T_i^{\updownarrow}) \leq \tcrn(U)$
and the proof is finished.
\end{proof}

The last step of this section is to prove that the tile crossing number problem
is NP-hard for twisted diagonally separated planar tiles.
Due to their similarity to intermediate steps in the paper
\cite{DBLP:journals/siamcomp/CabelloM13}, it is no surprise that we can
easily derive hardness using the same means;
from NP-hardness of the so called anchored crossing number.

An {\em anchored graph}~\cite{DBLP:journals/siamcomp/CabelloM13} 
is a triple $(G,A,\sigma)$, where $G$ is a graph,
$A\subseteq V(G)$ are the {anchor} vertices and $\sigma$ is a cyclic
ordering (sequence) of~$A$.
An {\em anchored drawing} of $(G,A,\sigma)$ is a drawing of $G$ in
a closed disc~$\Delta$ such that the vertices of $A$ are placed on the boundary
of $\Delta$ in the order specified by $\sigma$,
and the rest of the drawing lies in the interior of~$D$.
The {\em anchored crossing number $\acrn(G,A,\sigma)$},
or shortly $\acrn(G)$, is the minimum number of 
pairwise edge crossings in an anchored drawing of $(G,A,\sigma)$.
A {\em planar anchored graph} is
an anchored graph that has an anchored drawing without crossings.
Any subgraph $H\subseteq G$ naturally defines the corresponding
anchored subgraph $\big(H,A\cap V(H),\,\sigma\!\restriction\!V(H)\big)$. 

\begin{theorem}[Cabello and Mohar~\cite{DBLP:journals/siamcomp/CabelloM13}]%
\footnote{Note that \cite{DBLP:journals/siamcomp/CabelloM13} in general deals with
weighted crossing number, in the same way as we do e.g.\ in Proposition~\ref{pro:weighted}.
However, since their weights are always polynomial in the graph size,
Theorem~\ref{thm:anchoredhard} holds also for unweighted graphs.}
\label{thm:anchoredhard}
Let $G$ be an anchored graph that can be
decomposed into two vertex-disjoint connected planar anchored subgraphs.
Let $k\geq1$ be an integer.
Then it is NP-complete to decide whether $\acrn(G)\leq k$.
% \\
% Moreover, the result holds under the following additional assumptions:
% denoting by $H_1,H_2$ the two vertex-disjoint planar anchored subgraphs such
% that $H_1\cup H_2=G$, both the subgraphs $H_1\sem A$ and $H_2\sem A$
% are connected and nonempty.
\end{theorem}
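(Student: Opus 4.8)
The plan is to give a polynomial-time reduction from a known NP-hard problem; a convenient choice is \textsc{Max-Cut}, which I would use in its complementary form \textsc{Min-Uncut}: given a graph $H$ and an integer $K$, decide whether $V(H)$ admits a $2$-colouring with at most $K$ monochromatic edges. Membership of the anchored crossing number problem in NP is routine (guess, for every edge, the list and cyclic order of its crossings, and verify planarity of the resulting anchored plane graph), so only NP-hardness is at issue. From $(H,K)$ I would construct in polynomial time an anchored graph $G$ and an integer $k$ with $\acrn(G)\le k$ if and only if $H$ has a bipartition with at most $K$ monochromatic edges, arranging moreover that $G$ be the disjoint union of two connected planar anchored subgraphs --- which matches the hypothesis, since no edge can join two vertex-disjoint parts.

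First I would fix a cyclic placement of all anchors on the boundary circle and set $G=G_1\cup G_2$, where $G_1$ is a \emph{rigid frame}: a connected (say $3$-connected, or otherwise embedding-forced) planar anchored graph whose ``skeleton'' edges carry a weight so large that, by Proposition~\ref{pro:weighted}, they are never crossed in any drawing with at most $k$ crossings. The skeleton then forces an essentially canonical drawing of $G_1$ and carves the disc into a fixed family of faces, among which the anchors of $G_2$ are distributed in a pattern encoding $H$. The subgraph $G_2$ would be built from one \emph{variable gadget} per vertex of $H$ and one \emph{conflict gadget} per edge of $H$: a variable gadget is a $2$-separation of $G_2$ that, with the frame fixed, has exactly two reasonable drawings --- routing its ``tentacle'' into one of two designated faces of $G_1$, read as the colour of the vertex --- while a conflict gadget attached to the tentacles of $u$ and $v$ is built so that it can be drawn with no extra crossing when these tentacles lie in opposite faces but is forced to incur one extra unit of crossing when they lie in the same face. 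One must check that $G_1$ and $G_2$ are each, in isolation, connected and planar anchored for the chosen cyclic anchor order --- a matter of laying out all ``default'' routings simultaneously crossing-free.

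Balancing the weights so that the forced, gadget-independent part of any drawing accounts for exactly $c$ crossings while each monochromatic edge of $H$ accounts for exactly one further unit, a drawing realising a bipartition with $m$ monochromatic edges would have exactly $c+m$ crossings; putting $k:=c+K$ gives one implication. For the converse I would start from an optimal anchored drawing of $G$, use the weights to conclude that the skeleton of $G_1$ is uncrossed and drawn canonically, infer that every variable gadget sits in one of its two intended states and hence defines a bipartition of $H$, and then show that the crossings present number at least $c$ plus the number of monochromatic edges of that bipartition. A final application of Proposition~\ref{pro:weighted} replaces the weighted instance by an unweighted one of size polynomial in $|H|+K$ (all weights used being polynomially bounded), which completes the reduction.

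I expect the crux to be this last, lower-bound argument: ruling out ``cheating'' drawings, that is, proving that no clever global anchored drawing of $G_1\cup G_2$ beats the gadget-respecting ones. This is exactly where the large weights and the rigidity of the frame must be exploited --- so that each variable gadget truly has only its two intended drawings and a conflict gadget cannot be ``slipped past'' a skeleton edge for free, and so that distinct conflict gadgets do not interfere. A secondary obstacle is arranging the cyclic order of anchors so that $G_1$ and $G_2$ are separately planar anchored, and a last, bookkeeping point is to keep all weights --- hence the size of the final unweighted instance and the value $k$ --- polynomial in the \textsc{Min-Uncut} instance.
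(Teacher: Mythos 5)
The paper does not actually prove Theorem~\ref{thm:anchoredhard}: it is imported verbatim from Cabello and Mohar~\cite{DBLP:journals/siamcomp/CabelloM13}, with only a footnote observing that the polynomially bounded weights used there can be stripped via Proposition~\ref{pro:weighted}. So there is no in-paper argument to match; your attempt has to stand on its own as a re-proof of the central technical result of~\cite{DBLP:journals/siamcomp/CabelloM13}, and as written it does not.

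The concrete gap is that everything carrying the mathematical weight is deferred. The variable and conflict gadgets are never constructed, the offset $c$ is never exhibited, and the lower bound (ruling out ``cheating'' drawings) is explicitly left open as ``the crux''. In this particular setting that crux is not a routine verification, for two structural reasons the sketch glosses over. First, every crossing in an anchored drawing of $G_1\cup G_2$ is either between the two pieces or internal to one piece; since each piece is by hypothesis a connected \emph{planar} anchored graph, internal crossings can often be rerouted away, so a conflict gadget living inside $G_2$ must earn its unit of cost by crossing something specific --- you never say which two edges are forced to cross when the tentacles of $u$ and $v$ land in the same face, nor why an adversarial drawing cannot pay that cost elsewhere or amortize one crossing over several conflict gadgets. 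Second, obtaining a clean ``$+1$ per monochromatic edge, $+0$ otherwise'' additive cost is exactly the hard part: interleaved-anchor constructions of this kind naturally yield distance-dependent (linear-arrangement-type) objectives, as in Garey and Johnson's original hardness proof for the crossing number, and converting such a cost into a cut-type $0/1$ cost per edge is where the real work lies. On top of this, the requirement that $G_1$ and $G_2$ each remain \emph{connected} planar anchored graphs once all gadgets are wired together is acknowledged but not resolved, and connectivity is not cosmetic --- it is used downstream in Corollary~\ref{cor:diagseptile}. In short, the proposal is a plausible research plan whose feasibility is precisely the content of the theorem; as a proof it is incomplete at every point where it could fail.
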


\begin{corollary}
\label{cor:diagseptile}
Let $T$ be a diagonally separated planar tile,
and $k\geq1$ be an integer.
Then it is NP-complete to decide whether $\tcrn(T^{\updownarrow})\leq k$.
\end{corollary}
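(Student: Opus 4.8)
The plan is to reduce from the anchored crossing number problem of Theorem~\ref{thm:anchoredhard}, keeping the threshold~$k$ unchanged. Given an anchored graph $(G,A,\sigma)$ with a decomposition $G=G_1\cup G_2$ into two vertex-disjoint connected planar anchored subgraphs, and writing $A_i:=A\cap V(G_i)$, I would first observe that we may assume $A_1,A_2\neq\emptyset$ (otherwise one part fits planarly into a face of a drawing of the other and the instance is trivial). Since both parts then carry an anchor, the cyclic order $\sigma$ must contain a ``transition'', that is, two anchors $a,a'$ consecutive in $\sigma$ with $a\in A_2$ and $a'\in A_1$; I would cut $\sigma$ just before $a'$ to obtain a linear order $(a_1,\dots,a_p)$ with $a_1=a'\in A_1$ and $a_p=a\in A_2$.

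Out of this I would build an integer-weighted planar tile $T=\big(\hat G,(x^1,x^2),(y^1,y^2)\big)$. Using fresh vertices $x^1,x^2,y^1,y^2,q_1,\dots,q_p$, let $Q$ be the path $x^2q_1q_2\cdots q_py^1$ with every edge of a common large weight~$t$, add a weight-$1$ edge $q_ja_j$ for each $j$, let $\hat G^1$ be $G_1$ with a pendant vertex $x^1$ attached by a weight-$1$ edge to $a_1$, let $\hat G^2$ be $G_2$ with a pendant vertex $y^2$ attached by a weight-$1$ edge to $a_p$, and set $\hat G:=\hat G^1\cup\hat G^2\cup Q$. With $t:=w_1w_2+1$, where $w_i$ is the total weight of $\hat G^i\sem V(Q)$ (so $t$ is polynomial in the input and unweighting via Proposition~\ref{pro:weighted} keeps the reduction polynomial), I would verify the conditions of Definition~\ref{def:diagonalsep}: the only points needing a word are that $\hat G^i\sem V(Q)$ equals $G_i$ with one pendant attached and hence is connected, and that no edge of $\hat G^1\cup\hat G^2$ has both ends in $V(Q)\cup\{x^1,y^2\}$ because each $a_j\in V(G)$ lies outside that set. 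I would also check that $T$ is a planar tile by drawing $Q$ as a diagonal of the square and placing crossing-free anchored drawings of $G_1$ (with $x^1$) and of $G_2$ (with $y^2$) into the two triangles it cuts off, joining each $a_j$ to $q_j$ along $q_ja_j$.

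The core of the argument would be the identity $\tcrn(T^{\updownarrow})=\acrn(G,A,\sigma)$; together with membership of ``$\tcrn(T^{\updownarrow})\le k$'' in NP (standard, e.g.\ via Proposition~\ref{pro:weighted}) this proves the Corollary. For $\acrn(G)\le\tcrn(T^{\updownarrow})$ I would take an optimal tile drawing of $T^{\updownarrow}$, note that $\tcrn(T^{\updownarrow})\le w_1w_2<t$ exactly as inside the proof of Lemma~\ref{lem:twistedorcompose}, so $Q$ is uncrossed; since each of the connected graphs $\hat G^1,\hat G^2$ meets $Q$ and has its pendant vertex on one fixed side of $Q$, both lie in the disc $\Delta$ on that side, whose boundary reads in cyclic order $x^2,q_1,\dots,q_p,y^1,y^2,x^1$, and deleting the two pendants and the isolated $x^2,y^1$ and contracting each $q_ja_j$ toward the boundary turns the restricted drawing into an anchored drawing of $(G,A,\sigma)$ with no more crossings. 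For the reverse inequality I would start from an optimal anchored drawing $D$; since $a_p,a_1$ are consecutive in $\sigma$, the boundary arc $\gamma$ between them is anchor-free, hence lies on the boundary of a single face $f$ of $D$ which is incident to both $a_p$ and $a_1$; thus I may add, inside $f$ and without new crossings, the pendant $y^2$ on $\gamma$ next to $a_p$ and the pendant $x^1$ on $\gamma$ next to $a_1$, and then reshape the disc into the square with $Q$ along the top, the $q_j$ pushed slightly inward off the $a_j$, and $x^1,y^2$ carried to the left and right walls along the strip beside $Q$, obtaining a tile drawing of $T^{\updownarrow}$ with exactly $\acrn(G)$ crossings.

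The part I expect to be the main obstacle is dealing with the two degree-one vertices $x^1,y^2$ that Definition~\ref{def:diagonalsep} forces into the tile: they must be attachable in the one direction and removable in the other at no crossing cost. It is precisely the choice of a $G_2$-to-$G_1$ transition of~$\sigma$ as the cut point that secures this, since it drops $x^1$ and $y^2$ into one common anchor-free gap, each beside an anchor of its own part, so that in the anchored disc both pendants route within a single face and in the tile drawing they reach the two walls of $T^{\updownarrow}$ through the region freed up beside the uncrossed path~$Q$. The remaining verifications --- the items of Definition~\ref{def:diagonalsep}, planarity of $T$, uncrossedness of $Q$, and NP membership --- should be routine and parallel the corresponding steps in the proof of Lemma~\ref{lem:twistedorcompose}.
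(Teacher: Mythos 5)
Your reduction is structurally the same as the paper's (reduce from Theorem~\ref{thm:anchoredhard}, string the anchors along a thick path $Q$ between $x^2$ and $y^1$, attach the two pendants $x^1,y^2$ at a $G_1$/$G_2$ transition of $\sigma$), but there is one concrete deviation that breaks the argument: you give the connector edges $q_ja_j$ weight~$1$, whereas the paper gives \emph{all} edges incident with $V(Q)$ --- including these connectors --- the large weight~$t$. Definition~\ref{def:diagonalsep} permits your choice (it only forces the path edges of $Q$ to be thick), so your $T$ is still a diagonally separated planar tile; the problem is that the claimed identity $\tcrn(T^{\updownarrow})=\acrn(G,A,\sigma)$ then fails. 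Your own backward argument already signals this: ``contracting each $q_ja_j$ \dots\ with no more crossings'' is only valid if the connectors are uncrossed, and nothing forces a weight-$1$ connector to be uncrossed in an optimal tile drawing. This is not a repairable bookkeeping issue but an actual failure of the reduction. Take $G_1$ a star with anchored centre $a_1$ and anchored leaves $\ell_1,\dots,\ell_d$, $G_2$ a single anchored edge $b_1b_2$, and $\sigma=(a_1,b_1,\ell_1,\dots,\ell_d,b_2)$; then $\acrn(G)=d$ because $b_1b_2$ separates $a_1$ from all the $\ell_i$ on the disc boundary. In your tile, however, $b_1b_2$ can escape the face containing $b_1$ by crossing the single weight-$1$ connector $q_0a_1$ into the face between $a_1q_0$ and the pendant edge $x^1a_1$, and then cross $x^1a_1$ into the outer face containing $b_2$: two crossings in total, independently of~$d$. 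So for $d\ge 3$ and $k=2$ your instance answers YES while the anchored instance answers NO.

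The fix is exactly the paper's weighting: set the weight of every edge incident with $V(Q)$ (path edges \emph{and} connectors) to $t=(|E(H_1)|+1)\cdot(|E(H_2)|+1)+1$, while the two pendant edges at $x^1,y^2$ keep weight $1$ as Definition~\ref{def:diagonalsep} demands. Since the ``overlay'' drawing witnesses $\tcrn(T^{\updownarrow})\le t-1$, no connector is ever crossed in an optimal drawing, the cheap detour above is priced out (it would cost at least $t$), and contracting the now-uncrossed connectors yields the anchored drawing with no increase in crossings. Everything else in your write-up (the choice of cut point, planarity of $T$, the forward direction, NP membership) matches the paper's proof and is fine.
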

Note that twisted diagonally separated planar tiles satisfy all the
assumptions of Theorem~\ref{thm:twistedhard}.
In particular, if $f$ denotes the edge incident to $x^1$ in $T$ then
both $T$ and $T^{\updownarrow}\sem f$ are planar tiles.
Since the edge weights in the reduction are polynomial,
the unweighted version in Theorem~\ref{thm:twistedhard} 
follows immediately via Proposition~\ref{pro:weighted}.

\begin{proof}
Membership of the ``$\tcrn(T^{\updownarrow})\leq k$'' 
problem in NP is trivial by a folklore argument;
we may guess the at most $k$ crossings of an optimal drawing,
replace those by new vertices and test planarity of the new tile.
We provide the hardness reduction from Theorem~\ref{thm:anchoredhard}.
Let $(G,A,\sigma)$ be an anchored graph anticipated in
Theorem~\ref{thm:anchoredhard}.
Then $G$ is a disjoint union of two connected components $H_1$ and $H_2$ 
where each of the corresponding anchored subgraphs $H_1,H_2$
is a planar anchored graph.
% Let $A_1\cup A_2=A$ be the anchor subsets corresponding to $H_1,H_2$.
Let $a=|\sigma|$ and $\sigma'$ be an ordinary (non-cyclic) 
sequence obtained from $\sigma$ by ``opening it'' at any position
such that $\sigma'(1)\in V(H_1)$ and $\sigma'(a)\in V(H_2)$.
% Let $w:=2|E(G)|+2$.

\begin{figure}[t]
\begin{center}\bigskip
\begin{tikzpicture}[scale=1.2]
\normalsize
\tikzstyle{every node}=[draw, thick, shape=circle, minimum size=3pt,inner sep=1.5pt, fill=black]
\tikzstyle{every path}=[color=black]
\draw[line width=2pt, dashed, color=gray] (0,-0.4) -- (0,2.4);
\draw (0,0) node[label=left:$x^1$] (x1) {};
\draw (0,2) node[label=left:$x^2$, fill=red] (x2) {};
\draw[line width=2pt, dashed, color=gray] (4,-0.4) -- (4,2.4);
\draw (4,2) node[label=right:$y^2$] (y2) {};
\draw (4,0) node[label=right:$y^1$, fill=red] (y1) {};
\draw[line width=2pt, color=red] (x2) -- (y1);
\tikzstyle{every node}=[draw, thick, shape=circle, minimum size=2pt,inner sep=1.1pt, fill=none]
\draw (1.3,0.2) node[ellipse, minimum width=50pt, minimum height=30pt] (G1) {$H_1$};
\draw (2.7,1.8) node[ellipse, minimum width=50pt, minimum height=30pt] (G2) {$H_2$};
\draw (2,1) node (d2) {}; \draw (0.3,1.85) node (d5) {};
\draw (2,1.3) node (dd2) {}; \draw (0.3,1.35) node (dd5) {};
\draw (2.8,0.6) node (d1) {}; \draw (3.4,0.3) node (d4) {};
\draw (2.8,0.3) node (dd1) {}; \draw (3.4,0.8) node (dd4) {};
\draw (1.2,1.4) node (d3) {}; \draw (0.6,1.7) node (d6) {};
\draw (1.2,1.1) node (dd3) {}; \draw (0.6,2.0) node (dd6) {};
\draw (dd1) -- (G1) -- (dd3) ; \draw (dd5) -- (G1) ;
\draw (dd2) -- (G2) -- (dd4) ; \draw (dd6) -- (G2) ;
\draw (dd5) -- (x1) ; \draw (dd4) -- (y2) ;
\tikzstyle{every path}=[line width=2pt, color=red]
\draw (d1) -- (dd1) ; \draw (d2) -- (dd2) ;
\draw (d3) -- (dd3) ; \draw (d4) -- (dd4) ;
\draw (d5) -- (dd5) ; \draw (d6) -- (dd6) ;
\end{tikzpicture}
\end{center}
\caption{Constructing a twisted diagonally separated planar tile;
	proof of Corollary~\ref{cor:diagseptile}.}
\label{fig:anchoredtotile}
\end{figure}
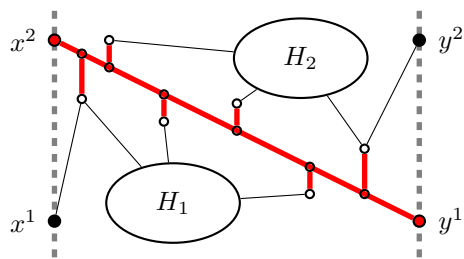

Let $Q$ be a path on the vertex set $(x^2,s_1,s_2,\dots,s_a,y^1)$ in this
order and let $x^1,y^2$ be isolated vertices.
We construct a graph $G_0$ from the disjoint union 
$G\cup Q\cup \{x^1,y^2\}$ by adding the following edges:
the edges $\{x^1,\sigma'(1)\}$ and $\{y^2,\sigma'(a)\}$,
and the edges $\{s_i,\sigma'(i)\}$ for $i=1,2,\dots,a$.
All the edges incident with $V(Q)$ get weight
$t=(|E(H_1)|+1)\cdot(|E(H_2)|+1)+1$, while the remaining edges have weight~$1$.
Observe (Figure~\ref{fig:anchoredtotile}) that
$T_0:=\big(G_0,(x^1,x^2),(y^1,y^2)\big)$ is a diagonally separated planar
tile by Definition~\ref{def:diagonalsep}.

We claim that $\acrn(G)\leq k$ if and only if
$\tcrn(T_0^{\updownarrow})\leq k$.
In the forward direction, we take an anchored drawing of $(G,A,\sigma)$
achieving $\acrn(G)$ crossings.
This drawing immediately gives (see also Figure~\ref{fig:diagonalsep} right)
a tile drawing of $T_0^{\updownarrow}$ in which
``thick'' $Q$ and its incident edges, and the vertices $x^1$ and $y^2$,
are all drawn along the boundary of the anchored drawing
without additional crossings.
So, indeed, $\tcrn(T_0^{\updownarrow})\leq \acrn(G)\leq k$.

In the backward direction, we observe that there is a valid tile drawing of
$T_0^{\updownarrow}$ in which the only crossings are between the edges from
$E(H_1)\cup\{\{x^1,\sigma'(1)\}\}$ and the edges from 
$E(H_2)\cup\{\{y^2,\sigma'(a)\}\}$.
Consequently, $\tcrn(T_0^{\updownarrow})\leq
	 (|E(H_1)|+1)\cdot(|E(H_2)|+1)=t-1$.
Assume a tile drawing $D_0$
(recall, $D_0$ is contained in a unit square $\Sigma$ 
with its walls on the left and right sides of~$\Sigma$) of $T_0^{\updownarrow}$\,
with $\tcrn(T_0^{\updownarrow})$ crossings.
By the previous, no ``thick'' edge incident with $Q$ is crossed in~$D_0$.
Since each of the subgraphs $H_1+\{x^1,\sigma'(1)\}$ and
$H_2+\{y^2,\sigma'(a)\}$ is connected,
and both $x^1,y^2$ are positioned to the same side of the ends $x^2,y^1$
of~$Q$ on the boundary of $\Sigma$,
both subgraphs $H_1$ and $H_2$ of $G$ are drawn in the same
region of $\Sigma$ separated by the drawing of~$Q$.
Contracting the uncrossed (``thick'') edges $\{s_i,\sigma'(i)\}$ for
$i=1,2,\dots,a$ hence results in an anchored drawing
of $G$ with at most $\tcrn(T_0^{\updownarrow})$ crossings.
The proof is finished.
% 
% where $w^2\cdot k\leq\ell\leq w^2(k+1)-1$.
% First, it trivially holds $\ell\leq w|E(H_1)|\cdot w|E(H_2)| +2\cdot
% w|E(G)|+1\leq t-1$, and so no edge of $Q$ is crossed in $D_0$.
% Since, for $i=1,2$, $H_i\sem A$ is connected, it has to be drawn in one of
% the two regions to which the drawing of $Q$ divides $\Sigma$.
% Since both $x^1,y^2$ (belonging to $H_1\sem A$, $H_2\sem A$, resp.)
% are positioned to the same side of the ends $x^2,y^1$
% of~$Q$, both subgraphs $H_1$ and $H_2$ of $G$ are drawn in the same
% region, and hence the restriction of $D_0$ to $G$ is 
% a valid anchored drawing of $G$ inside this
% region bounded partially by the drawing of~$Q$ inside~$\Sigma$.
% As every edge of $G$ is of weight $w$ in $G_0$ but unweighted in $G$,
% there are at most $\lfloor\ell/w^2\rfloor$ crossings in $D_0\restriction G$.
% We conclude $\acrn(G)\leq\lfloor\ell/w^2\rfloor=k$.
\end{proof}

\section{Cross-composing}
%%%%%%%%%%%%%%%%%%%%%%%%%%%%%%%%%%%%%%%%%%%%%%%%%%%%%%%%%%%%%%%%%
\label{sec:cross-composing}

We now prove the main result, Theorem~\ref{thm:nokernel}.
By Theorem~\ref{thm:nopolykernel} we know that it is enough to construct an
{\sc or}-cross-composition, that is an algorithm satisfying the requirements
of Definition~\ref{def:crosscomposition}.

\begin{lemma}
\label{lem:tilecr-compose}
Let $\ca L$ be the language of instances $\tuple{T^{\updownarrow},k}$ where
$T$ is a diagonally separated planar tile and $k$ an integer polynomially
bounded in $|T|$,
such that $\tcrn(T^{\updownarrow})\leq k$.
Let an equivalence relation $\sim$ be given as
$\tuple{T_1^{\updownarrow},k_1}\sim\tuple{T_2^{\updownarrow},k_2}$
iff $k_1=k_2$.

Then $\ca L$ admits an {\sc or}-cross-composition, with respect to $\sim$,
into the graph crossing number problem ``$\crg(G)\leq k$'' parameterized by~$k$.
Moreover, this is true even if we restrict $G$ to be an almost-planar graph.
\end{lemma}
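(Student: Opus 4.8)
The plan is to realise exactly the scheme pictured in the introduction (Figure~\ref{fig:tilecomposition}). Given $t$ instances $\tuple{T_1^{\updownarrow},k_0},\dots,\tuple{T_t^{\updownarrow},k_0}$ of $\ca L$ — all carrying the same parameter $k_0$ because they are $\sim$-equivalent — I first normalise them. I may assume $t$ is odd: if not, append one more copy of $x_1$, which changes neither the {\sc or} of the instances nor their $\sim$-class and increases $\log t$ by at most~$1$. I may also assume the thick path $Q_i$ of every $T_i$ has edge weight at least $k_0+1$: if not, raise its weight; since $\tcrn(T_i^{\updownarrow})\le w_i^1\cdot w_i^2\le t_i-1$ (where $t_i$ is the current weight of $Q_i$), the path $Q_i$ remains uncrossed in every optimal tile drawing of $T_i^{\updownarrow}$ however heavy it is made (cf.\ the proof of Lemma~\ref{lem:twistedorcompose}), so this changes neither $\tcrn(T_i^{\updownarrow})$ nor diagonal separatedness nor membership in~$\ca L$. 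Now put $U:=\otimes(T_1^{\updownarrow},\dots,T_t^{\updownarrow})$; by Lemma~\ref{lem:twistedorcompose}, $\tcrn(U)=\min_i\tcrn(T_i^{\updownarrow})$, hence $\tcrn(U)\le k_0$ if and only if $x_i\in\ca L$ for some~$i$.

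Second, I would turn the tile $U$ — whose walls have size two, say $\lambda=(x_1^1,x_1^2)$ and $\rho=(y_t^2,y_t^1)$ — into a weighted graph $G_0$ by attaching a rigid frame of edge weight $k_0+1$: the $4$-cycle $\Gamma$ through the four wall vertices in the cyclic order given by the walls (to be drawn along the boundary of the tile's square), together with one new vertex $\omega$ joined to all four wall vertices ($\Gamma$ with $\omega$ forms a wheel, which is planar and can be placed ``around'' the square). I then pass to an equivalent unweighted graph via Proposition~\ref{pro:weighted} and output $\tuple{G_0,k_0}$. The construction runs in time polynomial in $|x_1|+\dots+|x_t|$ (all weights are polynomial in the input size), and $k_0$ is polynomially bounded in $\max_i|x_i|$ by the definition of $\ca L$, hence in $\max_i|x_i|+\log t$, so the requirements of Definition~\ref{def:crosscomposition} are met.

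Third, I would prove $\crg(G_0)\le k_0 \iff \tcrn(U)\le k_0$, which combined with the above is the {\sc or}-cross-composition property. The forward direction is routine: an optimal tile drawing of $U$ sits inside a square with the walls on its sides, and $\Gamma$ and $\omega$ can be added along and around that square without any new crossing, so $\crg(G_0)\le\tcrn(U)$. For the converse, take a drawing of $G_0$ with at most $k_0$ crossings; all edges of $\Gamma$, of the thick path $Q=Q_1\cup\dots\cup Q_t$, and at $\omega$ have weight $>k_0$, hence are uncrossed. Then $\Gamma$ is an uncrossed cycle bounding two discs, one of which, say $\Delta$, contains $\omega$; the four uncrossed edges at $\omega$ cut $\Delta$ into four sectors, each bounded by two $\omega$-edges and one edge of $\Gamma$ and so touching only two of the four wall vertices. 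Now $U$ minus its four wall vertices is a single connected subgraph (the wall vertices are pendant in $U$ by Definition~\ref{def:diagonalsep}) that is joined by edges to all four of them, so it cannot be confined to one such sector, and since it is connected while $\Gamma$ and the $\omega$-edges are uncrossed it must lie entirely in the other disc. Reading $\Gamma$ as the boundary of the square then exhibits a tile drawing of $U$ with at most $k_0$ crossings, so $\tcrn(U)\le\crg(G_0)\le k_0$. With $\ca L$ NP-hard by Corollary~\ref{cor:diagseptile}, Theorem~\ref{thm:nopolykernel} then yields Theorem~\ref{thm:nokernel}.

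It remains to note that $G_0$ is almost-planar, which is where diagonally separated tiles pay off a second time. Let $f$ be the weight-$1$ edge incident with $x_1^1$ in $T_1$; then $T_1^{\updownarrow}\sem f$ is a planar tile (as remarked after Corollary~\ref{cor:diagseptile}), and since its mirror $T_1\sem f\subseteq T_1$ is planar too, $T_1^{\updownarrow}\sem f$ is in addition a twisted planar tile. Hence $U\sem f=\otimes(T_1^{\updownarrow}\sem f,\,T_2^{\updownarrow},\dots,T_t^{\updownarrow})$ is a join of an odd number of twisted planar tiles, one of which is planar, so $\tcrn(U\sem f)=0$ by Example~\ref{ex:tilesjoincr}; that is, $U\sem f$ is a planar tile, and attaching the frame to a planar tile as above yields a planar graph. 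Thus $G_0\sem f$ is planar, i.e.\ $G_0$ is almost-planar, which gives the ``moreover'' clause of Lemma~\ref{lem:tilecr-compose}. The one genuinely delicate point in this plan is the sector-confinement argument in the converse direction of the third paragraph (in particular, ruling out that the weight-$1$ edge $f$ and the analogous edge at $y_t^2$ sneak across a sector boundary); everything else is bookkeeping on top of Lemma~\ref{lem:twistedorcompose}, Example~\ref{ex:tilesjoincr}, Proposition~\ref{pro:weighted}, and Corollary~\ref{cor:diagseptile}.
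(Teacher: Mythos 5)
Your proposal is correct and follows essentially the same route as the paper: concatenate the right-inverted tiles, invoke Lemma~\ref{lem:twistedorcompose} for the {\sc or}-property, enclose the resulting tile in a weight-$(k+1)$ $4$-cycle through the wall vertices, transfer to the unweighted setting via Proposition~\ref{pro:weighted}, and use the pendant weight-$1$ edge of $T_1$ for almost-planarity. Your extra normalisations (forcing $t$ odd by duplication rather than inverting the right wall, raising the weights of the paths $Q_i$, and adding the apex $\omega$) are harmless but unnecessary --- the paper gets the converse direction directly from the facts that no edge of the heavy $4$-cycle is crossed and that the rest of the graph is connected, hence drawn entirely in one of its two faces.
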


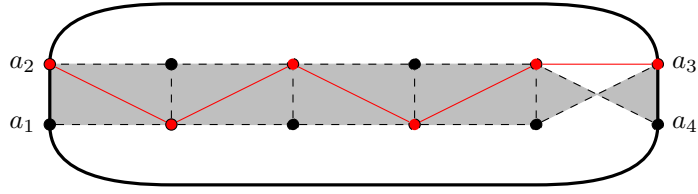
\begin{figure}[t]
\begin{center}\bigskip
\begin{tikzpicture}[scale=0.4]
\normalsize
\tikzstyle{every node}=[draw, shape=circle, minimum size=2.5pt,inner sep=1.5pt, fill=black]
\tikzstyle{every path}=[color=lightgray, fill=lightgray]
\draw (x1) rectangle (16,2) ;
\draw (16,0) -- (20,2) -- (20,0) -- (16,2) -- (16,0) ;
\tikzstyle{every path}=[color=black]
\draw[line width=1.2pt] (0,0) -- (0,2) to
	 [out=90,in=180] (4,4) -- (16,4) to [out=0,in=90] (20,2) --
	 (20,0) to [out=270,in=0] (16,-2) -- (4,-2) to
	 [out=180,in=270] (0,0) ;
\draw (0,0) node[label=left:$a_1$] (x1) {};
\draw (0,2) node[fill=red, label=left:$a_2$] (x2) {};
\draw (4,0) node[fill=red] (x3) {}; \draw (4,2) node (x4) {};
\tikzstyle{every path}=[color=black, dashed]
\draw (x1) -- (x3) -- (x4) -- (x2) ;
\draw (8,0) node (x5) {}; \draw (8,2) node[fill=red] (x6) {};
\draw (x3) -- (x5) -- (x6) -- (x4);
\draw (12,0) node[fill=red] (x7) {}; \draw (12,2) node (x8) {};
\draw (x5) -- (x7) -- (x8) -- (x6);
\draw (16,0) node (x9) {}; \draw (16,2) node[fill=red] (x10) {};
\draw (x7) -- (x9) -- (x10) -- (x8);
\draw (20,2) node[fill=red, label=right:$a_3$] (x11) {};
\draw (20,0) node[label=right:$a_4$] (x12) {};
\draw (x9) -- (x11) -- (x12) -- (x10);
\tikzstyle{every path}=[color=red]
\draw (x2) -- (x3) -- (x6) -- (x7) -- (x10) -- (x11) ;
\end{tikzpicture}
\end{center}
\caption{A sketch of the construction of $G$ in the proof of
	Lemma~\ref{lem:tilecr-compose}; the cycle $C_0$ is in bold and 
	the tiles of $U$ are shaded gray.}
\label{fig:crossintocrossing}
\end{figure}

\begin{proof}
Assume we are given $t$ equivalent instances $\tuple{T_i^{\updownarrow},k}$,
$i=1,2,\dots,t$, of the tile crossing number problem 
$\ca L$; ``$\tcrn(T_i^{\updownarrow})\leq k$''.
Each $T_i$ is a diagonally separated planar tile.
We construct a weighted graph $G$ as follows
(see also Figure~\ref{fig:crossintocrossing} and Lemma~\ref{lem:twistedorcompose}):
\begin{itemize}
\item Let $C_0$ be a cycle on four vertices $a_1,a_2,a_3,a_4$ in this cyclic
order, and all edges of $C_0$ having weight $k+1$.
\item Let $\ca T=(T_1^{\updownarrow},T_2^{\updownarrow}, \ldots, T_t^{\updownarrow})$.
Let $U:=\otimes\ca T$ if $m$ is odd, and $U:=(\otimes\ca T)^{\updownarrow}$ otherwise.
\item $G$ results from the union of $C_0$ and $U$ by identifying,
in the prescribed order, the left wall of $U$ with $(a_1,a_2)$
and the right wall of $U$ with $(a_4,a_3)$.
\end{itemize}

We show that $\crg(G)\leq k$ iff $\tcrn(U)\leq k$.
In the backward direction, any tile drawing of $U$ with $\ell$ crossings
gives a drawing of $G$ with $\ell$ crossings simply by embedding $C_0$
``around'' the tile~$U$.
Conversely, assume a drawing $D$ of $G$ with $\ell\leq k$ crossings,
and observe that no edge of $C_0$ (weighted $k+1$) is crossed in $D$.
Since $G\sem C_0$ is connected, it is drawn in one of the two faces of $C_0$
and this clearly gives a tile drawing of~$U$ with $\ell$ crossings.

Now, by Lemma~\ref{lem:twistedorcompose},
$\tcrn(U)\leq k$ iff there exists $i\in\{1,\dots,t\}$ such that
$\tcrn(T_i^{\updownarrow})\leq k$, as required by
Definition~\ref{def:crosscomposition}.
The construction of $G$ is easily finished in polynomial time,
and since the edge weights $k+1$ in $G$ are polynomially bounded,
there is a polynomial reduction to an unweighted crossing number instance by
Proposition~\ref{pro:weighted}.
It remains to verify that $G$ is almost-planar.
Let $e_1$ be the unique edge of $T_1$ incident with $a_1$ in~$G$.
Then $\tcrn(T_1^{\updownarrow}\sem e_1)=\tcrn(T_1\sem e_1)=0$
and hence~$\crg(G\sem e_1)=0$.
\end{proof}

Theorem~\ref{thm:nokernel} follows from Corollary~\ref{cor:diagseptile}
and Lemma~\ref{lem:tilecr-compose} via Theorem~\ref{thm:nopolykernel}
(note that $\sim$ trivially is a polynomial equivalence).

\section{Conclusion}
%%%%%%%%%%%%%%%%%%%%%%%%%%%%%%%%%%%%%%%%%%%%%%%%%%%%%%%%%%%%%%%%%
\label{sec:conclusion}

We have proved that the graph crossing number problem parameterized by
the number of crossings, which is known to be fixed parameter tractable,
is highly unlikely to admit a polynomal kernelization.
The complexity of the crossing number problem has been commonly studied 
under various additional restrictions on the input graph.
Our negative result extends even to the instances in which the input graph
$G$ is one edge away from planarity (i.e., almost-planar~$G$).

On the other hand, the ordinary crossing number problem remains NP-hard 
for cubic graphs and for the so-called minor crossing 
number~\cite{DBLP:journals/jct/Hlineny06a}, and for graphs with a
prescribed edge rotation system~\cite{DBLP:journals/algorithmica/PelsmajerSS11}.
For a drawing of a graph, the {\em rotation} of a vertex 
is the clockwise order of its incident edges (in a local neighbourhood).
A {\em rotation system} is the list of rotations of every vertex.
As proved in~\cite{DBLP:journals/algorithmica/PelsmajerSS11}, there is a
polynomial equivalence between the problems of computing the crossing number
of cubic graphs and that of computing the crossing number under prescribed
rotation systems. 

The construction we use to show hardness in the paper,
produces instances which are ``very far'' from having small vertex degrees or
a fixed rotation system, and there does not seem to be any easy modification for that.
Nevertheless, we have an indication that the following strengthening might
also be true:

\begin{conjecture}
Let $G$ be a graph with a given rotation system.
Let $k\geq1$ be an integer.
The problem of whether there is a drawing of $G$ respecting the prescribed
rotation system and having at most $k$ crossings, parameterized by~$k$,
does not admit a polynomial kernel unless \nokernelhypo.

Consequently, the crossing number problem $\crg(G)\leq k$ restricted 
to cubic graphs $G$, and the analogous minor crossing number problem,
do not admit a polynomial kernel w.r.t.~$k$ unless \nokernelhypo.
\end{conjecture}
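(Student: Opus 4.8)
The plan is to re-run the scheme of this paper — NP-hardness of a twisted-planar-tile problem, then an {\sc or}-cross-composition — inside the world of graphs with a prescribed rotation system, and afterwards to carry the conclusion over to cubic graphs and to the minor crossing number through the classical reductions of~\cite{DBLP:journals/algorithmica/PelsmajerSS11,DBLP:journals/jct/Hlineny06a}. First I would equip tiles with rotation systems, i.e.\ prescribe a cyclic order $\pi_v$ of the incident edge-ends at each vertex~$v$, and call such a tile \emph{planar with rotation} if it has a crossing-free tile drawing realizing~$\pi$. The key is to attach to each diagonally separated tile $T$ of Definition~\ref{def:diagonalsep} a \emph{single} rotation system $\pi$ realized simultaneously by the crossing-free straight tile drawing of $T$ and by the ``overlaid'' drawing of $T^{\updownarrow}$ that exposes its crossings (cf.\ Figure~\ref{fig:diagonalsep} right); this should be possible since the latter drawing merely repositions the very plane embeddings of the subgraphs $G^1,G^2$ already used by the former, together with $Q$. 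With such coherent rotations, the proof of Lemma~\ref{lem:twistedorcompose} adapts verbatim — the ``thick'' path $Q$ stays uncrossable in a cheap drawing, still forces the alternating above/below pattern of the components $H_i$, and hence still forces exactly one tile of the join to be drawn twisted — and feeding a family of such instances of equal objective value $k$ into the construction of Lemma~\ref{lem:tilecr-compose}, with the uncrossable cycle $C_0$ around the join and the rotation at each identified wall vertex being the obvious concatenation, yields one instance of the rotation-system crossing-number problem with polynomially bounded parameter whose answer is the {\sc or} of the inputs. By Theorem~\ref{thm:nopolykernel} this already gives the first half of the conjecture; note that the composed graph may here be a multigraph of unbounded degree, which the rotation-system formulation accommodates without issue.

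The genuinely non-routine ingredient — and the step I expect to be the main obstacle, together with producing the coherent rotations above — is a rotation-aware version of Corollary~\ref{cor:diagseptile}: that ``$\tcrn(T^{\updownarrow})\le k$'' is NP-hard already for twisted diagonally separated tiles carrying a prescribed rotation system. The natural attempt is to revisit the reduction of Cabello and Mohar~\cite{DBLP:journals/siamcomp/CabelloM13} and verify that their NP-hardness of the anchored crossing number, for graphs splitting into two vertex-disjoint connected \emph{planar} anchored subgraphs $H_1,H_2$, survives when each $H_i$ additionally carries its planar rotation system (which, being planar, already pins the embedding of $H_i$ down to a reflection); intuitively this only rigidifies the instance and leaves available the ``overlaid'' placement of $H_1$ and $H_2$ that their argument exploits, but one genuinely has to re-audit their gadget analysis under the extra constraint, and it is conceivable that a gadget must be redesigned. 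If that route stalls, the fallback is to reduce directly from the NP-hardness of the crossing number with a prescribed rotation system~\cite{DBLP:journals/algorithmica/PelsmajerSS11}, wrapping an arbitrary such instance into the diagonally separated shape by hand exactly as in the proof of Corollary~\ref{cor:diagseptile} — adding the path $Q$, its pendant edges and the two degree-one vertices, and extending the prescribed rotation (there essentially forced) along the new parts.

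For the ``consequently'' clause, recall the standard fact that a polynomial parameter transformation — a polynomial-time reduction keeping the parameter polynomially bounded — from an NP-hard problem having no polynomial kernel into a problem in NP rules out a polynomial kernel for the latter (see~\cite{DBLP:series/txcs/DowneyF13}). It thus suffices to exhibit such transformations \emph{from} the rotation-system crossing-number problem: for cubic graphs, the reduction of~\cite{DBLP:journals/algorithmica/PelsmajerSS11} that replaces each vertex by a small rigid cubic ``rotation-forcing'' gadget; and for the minor crossing number, the analogous reduction of Hlin\v en\'y~\cite{DBLP:journals/jct/Hlineny06a}, which shares the machinery of the cubic case. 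In each case the only thing to check is the routine point that the reduction is polynomial-time and alters the relevant crossing-number bound by at most a polynomial, hence is a polynomial parameter transformation. The remaining verifications are equally routine: all the crossing-number variants in play lie in NP by the guess-at-most-$k$-crossings-and-test-planarity argument from the proof of Corollary~\ref{cor:diagseptile}, using that genus~$0$ of a given rotation system is polynomial-time checkable via Euler's formula; and the rotation-system crossing-number problem is NP-hard by~\cite{DBLP:journals/algorithmica/PelsmajerSS11}. Combined with the first half, this proves the second.
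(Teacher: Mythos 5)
First, note that the statement you are proving is stated in the paper only as a \emph{conjecture}: the authors explicitly write that their construction ``produces instances which are very far from having small vertex degrees or a fixed rotation system, and there does not seem to be any easy modification for that.'' There is no proof in the paper to compare against, so your proposal must stand on its own, and it has a genuine gap at exactly the point the authors flag. The claim that ``the proof of Lemma~\ref{lem:twistedorcompose} adapts verbatim'' is where the argument breaks. The {\sc or}-composition needs the upper bound $\tcrn(U)\leq\tcrn(T_i^{\updownarrow})$ for \emph{every} $i$, i.e.\ for each choice of $i$ there must be a drawing of $\otimes\ca T$ in which tile $i$ is the unique twisted one. In such a drawing the components $H_0,\dots,H_m$ of $U\sem V(Q)$ alternate sides of $Q$ except at position $i$; changing the chosen $i$ to $i'$ flips the side of every $H_j$ with $i\le j<i'$. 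Placing a connected planar piece below $Q$ rather than above it (with its attachments to $Q$ in the same left-to-right order) is a reflection of that piece, and a reflection reverses the clockwise rotation at every vertex. So once a single rotation system $\pi$ is prescribed, each nontrivial $H_j$ is locked to one side of $Q$, only one value of $i$ remains feasible, and the instance computes that one $\tcrn(T_i^{\updownarrow})$ rather than the {\sc or} of all $t$ inputs. Your proposed fix --- a rotation realized simultaneously by the straight drawing of $T$ and the overlaid drawing of $T^{\updownarrow}$ --- addresses a single tile but not this global flipping, which is the actual engine of Lemma~\ref{lem:twistedorcompose} and of Figure~\ref{fig:tilecomposition}. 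Repairing it would require something genuinely new, e.g.\ tiles whose underlying graphs admit an automorphism carrying $\pi$ to its reverse (a ``reflection-symmetric'' gadget), and one would then have to re-establish the analogue of Corollary~\ref{cor:diagseptile} for such symmetric instances; none of this is routine.

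The remaining parts of your plan are more reasonable but also not free of gaps. The rotation-aware analogue of Corollary~\ref{cor:diagseptile} is, as you say yourself, unverified: prescribing the planar rotations of $H_1,H_2$ removes exactly the freedom (independent re-embedding and reflection of the two anchored pieces) that the Cabello--Mohar reduction manipulates, so ``this only rigidifies the instance'' is an assertion, not an argument. For the ``consequently'' clause, the reduction of \cite{DBLP:journals/algorithmica/PelsmajerSS11} from rotation systems to cubic graphs must be checked to be a polynomial \emph{parameter} transformation --- the vertex gadgets there use heavy (polynomially weighted) edges whose expansion could a priori inflate the crossing number, i.e.\ the parameter, beyond a polynomial in $k$ --- and this is precisely the kind of step that cannot be waved through as ``routine'' in a kernelization lower bound, where the parameter bound is the whole point. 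In short: the overall strategy is the natural one, but the central lemma of the paper does not survive the transfer to prescribed rotation systems, and the statement should be regarded as open, as the paper itself does.
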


\subsection*{Acknowledgements}
This research was supported by the Czech Science Foundation project No.~14-03501S.
We would also like to thank the anonymous referees for helpful comments,
and the organizers of the Workshop on Kernelization
2015, at the University of Bergen, Norway, where the idea of this paper was
born.

%%%%%%%%%%%%%%%%%%%%%%%%%%%%%%%%%%%%%%%%%%%%%%%%%%%%%%%%%%%%%%%%%
\bibliography{phcross,gtbib}

\end{document}